\newcommand{\argmin}{\mathop{\arg\min}}
\newcommand{\bs}{\boldsymbol}
\newcommand{\diff}{\mathrm{d}}
\newcommand{\logit}{\mathrm{logit}}
\newcommand{\kdagger}{k^{\dagger}}
\newcommand{\ldagger}{\ell^{\dagger}}
\newcommand{\rdagger}{r^{\dagger}}
\newcommand{\setX}{\mathcal{X}}
\newcommand{\setU}{\mathcal{U}}
\newcommand{\xupper}{\eta}
\newcommand{\trunc}[1]{\left\llbracket #1 \right\rrbracket}
\newcommand{\sign}{\mathrm{sign}}
\newcommand{\round}[1]{[ #1 ]}
\newcommand{\cut}[1]{\mathfrak{C}(#1)}
\newcommand{\cox}{\mathrm{Cox}}
\newcommand{\satoh}{\mathrm{Satoh}}
\newcommand{\esssup}{\mathop{\text{esssup}}}
\newcommand{\pom}{\mathrm{POM}}
\newcommand{\npom}{\mathrm{NPOM}}
\newcommand{\cnpom}{\mathrm{N}^3\mathrm{POM}}
\newcommand{\dctm}{\mathrm{DCTM}}
\newcommand{\best}[1]{\textbf{\textcolor{blue}{#1}}}
\newcommand{\second}[1]{\textbf{\textcolor{red}{#1}}}
\theoremstyle{definition}
\newtheorem*{theorem*}{Theorem}
\newtheorem{proposition}{Proposition}
\newtheorem{note}{Note}
\newtheorem{remark}{Remark}
\title{An interpretable neural network-based \\ non-proportional odds model for ordinal regression}
\author[1,2]{Akifumi Okuno}
\author[3,1]{Kazuharu Harada}
\affil[1]{The Institute of Statistical Mathematics}
\affil[2]{RIKEN Center for Advanced Intelligence Project}
\affil[3]{Tokyo Medical University}
\affil[ ]{\textit{okuno@ism.ac.jp}, \textit{haradak@tokyo-med.ac.jp}}
\date{\empty}
\begin{document}

\maketitle

\begin{abstract}
This study proposes an interpretable neural network-based non-proportional odds model (N$^3$POM) for ordinal regression. N$^3$POM is different from conventional approaches to ordinal regression with non-proportional models in several ways: (1) N$^3$POM is defined for both continuous and discrete responses, whereas standard methods typically treat the ordered continuous variables as if they are discrete, (2) instead of estimating response-dependent finite-dimensional coefficients of linear models from discrete responses as is done in conventional approaches, we train a non-linear neural network to serve as a coefficient function. 
Thanks to the neural network, N$^3$POM offers flexibility while preserving the interpretability of conventional ordinal regression. We establish a sufficient condition under which the predicted conditional cumulative probability locally satisfies the monotonicity constraint over a user-specified region in the covariate space. Additionally, we provide a monotonicity-preserving stochastic (MPS) algorithm for effectively training the neural network. We apply N$^3$POM to several real-world datasets.
\end{abstract}

\textbf{Keywords:} Continuous ordinal regression, Non-proportional odds model, Neural network

\section{Introduction}

Ordinal regression modeling treats the response as measured on an ordinal scale and aims to understand the relationship between the response order and covariates~\citep{McCullagh1980-dl,Agresti2010-at}. In the context of ordinal regression modeling, response variables are typically assumed to be ordinal and discrete (e.g., the stage of cancer, the scores of wine quality). While standard regression-based approaches are mainly interested in the actual value of the response, this study focuses on thresholds of the responses; that is, the probability of the response being less than or equal to a specific threshold as a function of the covariates.

Let $d,J \in \mathbb{N}$ and consider $(G,X)$, a pair of such discrete ordinal response variables $G \in \{1,2,\ldots,J\}$ and their covariate $X \in \mathbb{R}^d$. 
A standard model for analyzing such a threshold is the proportional odds model~\citep[POM;][]{McCullagh1980-dl,mccullagh1989generalized}:
\[
    \logit(\mathbb{P}_{\pom}(G \le j \mid X=\bs x))
    =
    \alpha_j + \langle \bs \beta,\bs x \rangle \quad (j \in \{1,2,\ldots,J-1\}),
\]
where $\logit(z) = \log \frac{z}{1-z}$ is the logit function, 
and $\alpha_1,\alpha_2,\ldots,\alpha_{J-1} \in \mathbb{R},\bs \beta \in \mathbb{R}^d$ are parameters to be estimated. 
This model satisfies the proportionality assumption~\citep{McCullagh1980-dl}, which states that the regression coefficients are equal across all thresholds. However, the proportionality assumption is often considered to be violated~(see, e.g., \citet{long2006regression}). For instance, in the context of restaurant ratings, fundamental factors such as hygiene might be deemed more crucial in lower scores, whereas aspects like ingredient quality and wine selection could carry greater significance in higher scores. A practical example of the violation of the proportionality assumption is explored in \citet{williams2016understanding}.

One approach to circumvent the assumption violation is by leveraging the non-proportional odds model (NPOM, a.k.a., generalized ordinal logit model):
\[
    \logit(\mathbb{P}_{\npom}(G \le j \mid X=\bs x))
    =
    \alpha_j + \langle \bs \beta_j,\bs x \rangle \quad (j \in \{1,2,\ldots,J-1\}),
\]
which allows the coefficients to vary across the response thresholds. See, for example, \citet{mccullagh1989generalized}, \citet{Peterson1990-uu}, 
\citet{foresi1995conditional}, 
and \citet{williams2006generalized}. 
However, there remain several difficulties in leveraging NPOM while preserving interpretability. 

\begin{enumerate}[{(D-1)}]
\item
The first difficulty is the lack of monotonicity of the predicted conditional cumulative probability~(CCP); that is, the predicted CCP may violate monotonicity as $\hat{\mathbb{P}}_{\npom}(G \le j \mid X=\bs x) > \hat{\mathbb{P}}_{\npom}(G \le j+1 \mid X=\bs x)$ for some $(j,\bs x)$, because of the flexibility of the varying coefficients~\citep{tutz2022sparser,lu2022continuously}. 
Given that a simple POM with $\alpha_1 \le \alpha_2 \le \cdots \le \alpha_{J-1}$ is monotone in terms of the CCP, $\bs \beta_j=\bs \beta_*+\bs \delta_j$ with a prototype $\bs \beta_*$ is estimated with an $L^2$ penalty on $\bs \delta_j$ \citep{lu2022continuously}, and elastic net penalty \citep{Wurm2021-rv}. \citet{tutz2022sparser} further restricts the deviation as $\bs \delta_j=(j-J/2) \tilde{\bs \delta}$. However, specifying the proper penalty weights remains a problem.

\item
The second difficulty is the lack of the proximity guarantee of the estimated coefficients for adjacent thresholds. 
Given that it is more natural for the adjacent coefficients to be proximate, \citet{Tutz2016-gl} and \citet{ugba2021serp} incorporate a penalty between the coefficients of adjacent thresholds $\|\bs \beta_{j+1}-\bs \beta_{j}\|_2^2$. 
However, the adjacent penalty cannot be simply incorporated into the aforementioned monotonicity-guaranteed NPOM because of its incompatibility with the optimization algorithm.

\item
The third difficulty is the lack of extensibility to continuous responses. In some cases, we are interested in the probability that a continuous variable~(e.g., item prices, lifetime of people) is less than or equal to a threshold. 
Existing methods used to train the NPOM, which directly estimate countably many coefficients $\bs \beta_1,\bs \beta_2,\ldots,\bs \beta_J$, cannot be simply extended to the continuous response; a naive extension needs to estimate uncountably many coefficients $\{\bs \beta_u\}_{1 \le u \le J}$. 
\end{enumerate}

Our study distinguishes itself as the first to address all three of these difficulties simultaneously while maintaining a strong emphasis on interpretability. As a result, we propose a new approach called the \emph{neural network-based NPOM}~(N$^3$POM). Instead of directly modeling the cumulative probability, we train a neural network to estimate the coefficients of the linear model. This unique approach allows N$^3$POM to retain the interpretability that is characteristic of both conventional POM and NPOM. Additionally, we establish a sufficient condition for N$^3$POM to ensure the monotonicity constraint of the predicted CCP. To leverage this constraint effectively, we introduce a novel \emph{monotonicity-preserving stochastic} (MPS) algorithm. 
Finally, we demonstrate the effectiveness of N$^3$POM through experiments on a variety of synthetic and real-world datasets.

\section{Preliminaries and Related Works}

Section~\ref{subsec:problem_setting}--\ref{subsec:related_works} summarize problem-setting, symbols, and related works, respectively.

\subsection{Problem setting}
\label{subsec:problem_setting}
This section outlines the problem setting for this study. Let $n$, $d$, and $J$ be natural numbers, and let $H$ be a random variable representing an ordered response associated with a covariate $X \in \mathbb{R}^d$. 
We define the set 
\[
\setU := [1,J] = \{h \in \mathbb{R} \mid 1 \le h \le J\},
\]
where $H$ takes its values; the set $\setU$ is distinct from the discrete set $\{1,2,\ldots,J\}$, in which the responses of the conventional POM and NPOM take values.

Consider a dataset of $n$ observations, denoted as $\{(h_i,\bs x_i)\}_{i=1}^{n}$, where each is i.i.d. drawn from the pair of random variables $(H,X)$. The primary objective of this study is to explore the relationship between the covariate $X$ and the response $H$. To achieve this, we aim to predict the logit function applied to the conditional cumulative probability~(CCP), $\logit(\mathbb{P}(H \le u \mid X=\bs x))$ for $u \in \setU$, or equivalently, $\logit(\mathbb{P}(H > u \mid X=\bs x))$. Specifically, we seek to estimate the CCP, denoted as $\hat{\mathbb{P}}(H \le u \mid X=\bs x)$, in a manner that ensures its non-decreasing behavior with respect to $u \in \setU$, while holding $\bs x$ fixed. 
Throughout this study, we simply refer to this non-decreasing property as monotonicity.

\begin{note} 
In this study, instead of considering various intervals for the response range, such as $L,U$ where $-\infty < L < U < \infty$, we focus exclusively on the case where $L=1$ and $U=J \in \mathbb{N}$ (i.e., $\setU=[1,J]$). This specific case selection is primarily to straightforwardly compare our proposed N$^3$POM with the conventional POM and NPOM. In our numerical experiments, we utilize the conventional POM and NPOM as baseline models for the dataset $\{(h_i,\bs x_i)\}_{i=1}^{n}$, by discretizing the responses $h_i$ in advance. 
Although the set $\setU$ could potentially be expanded to include arbitrary intervals $[L,U]$, such a generalization is not deemed crucial for this study. To clarify this point, we here specifically consider the scenario where both the response $H^{\dagger}$ and its threshold $u^{\dagger}$ are within the interval $[L,U]$. Notably, a linear function $S(H^{\dagger}):=1 + (J-1)(H^{\dagger}-L)/(U-L)$ establishes a one-to-one monotonic relationship between the intervals $[L,U]$ and $[1,J]$. Consequently, we can calculate $\hat{\mathbb{P}}^{\dagger}(H^{\dagger} \le u^{\dagger} \mid X=\bs x)=\hat{\mathbb{P}}(S(H^{\dagger}) \le S(u^{\dagger}) \mid X=\bs x)$ for any $H^{\dagger},u^{\dagger} \in [L,U]$ right after obtaining the estimated CCP $\hat{\mathbb{P}}(H \le u \mid X=\bs x)$ for $H,u \in [1,J]$. 
\end{note}

\subsection{Symbols}
\label{subsec:symbols}

This study uses the following symbols. 
$\sigma(z)=1/(1+\exp(-z))$ represents the sigmoid function,  
$\logit(z)=\sigma^{-1}(z)=\log \frac{z}{1-z}$ represents the logit function, and 
$\rho:\mathbb{R} \to \mathbb{R}$ denotes a user-specified smooth activation function of a neural network (i.e., $\rho(z)=\tanh(z)$). 
$\langle \bs x,\bs x' \rangle = \bs x^{\top}\bs x'$ denotes an inner product of the vectors $\bs x,\bs x'$. $\|\bs x\|_{p}=\{x_1^p+x_2^p+\cdots+x_d^p\}^{1/p}$ for $\bs x=(x_1,x_2,\ldots,x_d), \, p \in \mathbb{N}$. 
$\emptyset$ denotes an empty set.

For clarity, $\nabla_{\bs \theta},\nabla_u$ are called by different names as the gradient (with respect to $\bs \theta$) and derivative (with respect to $u$), respectively. 
Particularly, for any function $f:\setU \to \mathbb{R}$, 
$f^{[1]}:\setU \to \mathbb{R}$ is called a weak derivative of $f$, if 
$f(u)=f(1)+\int_{1}^{u}f^{[1]}(\tilde{u}) \diff \tilde{u}$ holds for any $u \in \setU$. 
If $f$ is differentiable almost everywhere over $\setU$, $f^{[1]}$ is compatible with $\nabla_u f(u)$ except for all the (non-measurable) indifferentiable points in $f$; 
$f^{[1]}$ can be defined even if $f$ is indifferentiable at some points. 
Note that the weak derivative of $f$ is generally not unique, which is why we define a weak derivative for each function.

\subsection{Related works}
\label{subsec:related_works}

This section describes the related works. 
Among the various types of ordinal regression models, including the continuation-ratio logit and adjacent-categories logit models~\citep{Agresti2010-at}, this study places its primary emphasis on the cumulative logit model.

\subsubsection*{Transformation models}
Inspired by transformation models~\citep{box1964analysis}, \citet{foresi1995conditional} introduced a model for discrete conditional distributions, similar to NPOM, that is also termed distribution regression~\citep{chernozhukov2013inference}. 
\citet{Liu2017-zg} defines semiparametric linear transformation models. 
Seminal works explored most-likely transformation models (see, e.g., \citet{hothorn2018most}), defined as monotone transformations of $\langle \bs \beta(u),\bs c(\bs x)\rangle$, using general kernel basis expansion for $\bs \beta(u)$ and covariate transformation $\bs c(\bs x)$; therein, the coefficient function $\bs \beta(u)$ is defined for continuous response. 
Deep conditional transformation models \citep[DCTM;][]{baumann2021deep}, implemented as \verb|deeptrafo| package~\citep{kook2022estimating} in \verb|R| language, utilize deep neural networks for $\bs c$ and provide a sufficient condition for monotonicity when using Bernstein basis to compute $\bs \beta(u)$. However, unlike N$^3$POM, transformation models require predetermined kernels. 
Interpreting DCTM is more challenging than N$^3$POM as $\bs \beta(u)$ is a coefficient of the neural network output $\bs c(\bs x)$ (trained to guarantee the monotonicity adaptively to the dataset). 

\subsubsection*{Monotone neural network}
Another potentially possible model for estimating the CCP is a partially monotone neural network~\citep{daniels2010monotone}, which extends the univariate min-max network~\citep{sill1997monotonic} to multivariate settings. \citet{you2017deep} and \citet{liu2020certified} provide more flexible partially monotone deep neural networks. 
However, they lack interpretability.

\subsubsection*{Remaining extensions of ordinal regression} 
Ordinal regression has been extended to non-linear models; \citet{vargas2019deep} and \citet{vargas2020cumulative} replace the linear function $\langle \bs \beta,\bs x \rangle$ in POM with a neural network $m(\bs x)$, while N$^3$POM employs a neural network that outputs the coefficient vector $\bs \beta$. 
The simple intercept (SI) model in \citet{kook2022deep} first divides the covariate $\bs x=(\bs x',\bs x'')$ into $\bs x'$ of interest and the remaining $\bs x''$, and considers the combination of POM (with respect to $\bs x'$) and the neural network whose input is $\bs x''$. 
However, the aforementioned approaches consider prediction models independent of the threshold; they cannot capture the local relationship between the covariates and the response for each threshold. 
\citet{kook2022deep} also proposes a complex intercept~(CI) model, that can be regarded as a response-dependent prediction model (i.e., a non-linear extension of NPOM) with a discrete response. 
\citet{Thas2012-yj} proposes a probabilistic index model, which models the pairwise ordinal relationships of continuous and/or ordinal variables. Also, the proportional odds model in survival analysis \citep{bennett1983log,pettitt1984proportional} can be seen as a continuous extension of POM in ordinal regression. While they assume proportional regression coefficients, \citet{satoh2016logistic} further considers a non-proportional extension, which is also called the time-varying coefficient model. However, unfortunately, any sufficient condition to guarantee the monotonicity of their model is not provided.

\section{Proposed Model}
\label{section:OLCM}

In this section, we introduce a novel ordinal regression model designed specifically for continuous response variables. The proposed model, referred to as N$^3$POM, is elaborated upon in Section~\ref{subsec:proposed_model}. We also explore the monotonicity property of N$^3$POM in Section~\ref{subsec:monotonicity}. Subsequently, we present a parameter estimation algorithm in Section~\ref{subsec:parameter_estimation}

\subsection[Neural network-based non-proportional odds model (N3POM)]{Neural network-based non-proportional odds model~(N$^3$POM)}
\label{subsec:proposed_model}

To simultaneously address challenges outlined in the Introduction, including (D-1) the absence of monotonicity in the predicted CCP, (D-2) the absence of a proximity guarantee in the estimated coefficients for adjacent thresholds, and (D-3) the lack of extensibility to continuous responses, we introduce the \emph{neural network-based non-proportional odds model} (N$^3$POM): 
\begin{align}
    \logit(\mathbb{P}_{\cnpom}(H \le u \mid X=\bs x))
    &=
    \underbrace{
    a(u)
    +
    \langle \bs b(u), \bs x\rangle}_{=:f_u(\bs x)}, 
    \quad (u \in \setU=[1,J]). 
    \label{eq:cmccp}
\end{align}
Continuous functions $a:\setU \to \mathbb{R}$, $\bs b:\setU \to \mathbb{R}^d$, and their weak derivatives $a^{[1]},\bs b^{[1]}$, are defined later in \eqref{eq:au}--\eqref{eq:bu1}; we 
use the weak derivative 
$f_u^{[1]}(\bs x):=a^{[1]}(u) + \langle \bs b^{[1]}(u),\bs x \rangle$ to obtain the conditional probability density~(CPD) of $H \mid X$ as
\begin{align}
    q(u \mid X=\bs x)
    =
    \nabla_u \mathbb{P}_{\cnpom}(H \le u \mid X=\bs x)
    =
    \sigma^{[1]}(f_u(\bs x))
    f_u^{[1]}(\bs x).
\label{eq:conditional_pdf}
\end{align}
$\sigma(z)=1/(1+\exp(-z))$ denotes a sigmoid function and $\sigma^{[1]}$ is its derivative.

To obtain a non-negative CPD, the prediction model $f_u(\bs x)$ should be non-decreasing (with respect to $u \in \setU$ for any fixed $\bs x \in \setX$). Accordingly, we show a sufficient condition to guarantee the monotonicity in Section~\ref{subsec:monotonicity}. 
Using the CPD whose non-negativity is guaranteed, we provide a parameter estimation algorithm, which maximizes the log-likelihood defined later in \eqref{eq:proposed_log_lik}, in Section~\ref{subsec:parameter_estimation}. 
Note that estimating the CCP $\mathbb{P}_{\cnpom}(H \le u \mid X=\bs x)=\sigma(f_u(\bs x))$ is equivalent to estimating $\mathbb{P}_{\cnpom}(H >u \mid X=\bs x)
    =
    1-\mathbb{P}_{\cnpom}(H \le u \mid X=\bs x)
    =
    1-\sigma(f_u(\bs x))
    =
    \sigma(-f_u(\bs x))$.

\bigskip 
In the subsequent portion of this section, we will define the parametric functions $a$ and $\bs b$ and their weak derivatives $a^{[1]}$ and $\bs b^{[1]}$ to provide a comprehensive definition of N$^3$POM.

\begin{itemize}
\item 
Regarding the function $a:\setU \to \mathbb{R}$, this study considers a piece-wise linear functions with user-specified $R \in \mathbb{N}$ and $1 = j_1 < j_2 < \ldots < j_{R} = J$ (such that $\setU = [j_1,j_R]$):
\begin{align}
    a(u)
    :=   
    \begin{cases} 
    \alpha_r & (u = j_r, \: r \in \{1,2,\ldots,R\}) \\
    \alpha_{r-1} + s_{r-1}(u-j_{r-1})
    & (u \in (j_{r-1},j_{r}), \: r \in \{2,3,\ldots,R\}) \\
    \end{cases},
\label{eq:au}
\end{align}
where $s_{r-1}:=\frac{\alpha_{r}-\alpha_{r-1}}{j_{r}-j_{r-1}}$ denotes the slope of the line connecting two points $(j_{r-1},\alpha_{r-1})$ and $(j_r,\alpha_r)$. 
$-\infty < \alpha_1 \le \alpha_2 \le \cdots \le \alpha_R < \infty$ are parameters to be estimated. 
This piece-wise linear function is non-decreasing and satisfying $a(j_r)=\alpha_r$ ($r \in \{1,2,\ldots,R\}$). 
Consider a partition of the interval $\setU$: 
\[
    \setU_{r-1} := \begin{cases} 
        [j_{r-1},j_r) & (r \in \{2,3,\ldots,R-1\} \\
        [j_{R-1},j_R] & (r=R) \\
    \end{cases}
\]
satisfying $\bigcup_{r=2}^{R} \setU_{r-1}=\setU, \, \setU_{r} \cap \setU_{r'}=\emptyset \, (r \ne r')$. 
Then, we obtain the following weak derivative: 
\begin{align}
a^{[1]}(u)
:=
\sum_{r=2}^{R}
\mathbbm{1}(u \in \setU_{r-1})
s_{r-1}.
\label{eq:au1}
\end{align}

\item 
Regarding the vector-valued function $\bs b(u)=(b_1(u),b_2(u),\ldots,b_d(u)):\setU \to \mathbb{R}^d$, we employ independent neural networks~(NNs) $b_1,b_2,\ldots,b_d:\setU \to \mathbb{R}$ defined by
\begin{align}
    b_k(u)
    :=
    v_k^{(2)}
    +
    \sum_{\ell=1}^{L}
    w_{k,\ell}^{(2)}
    \rho(
    w_{k,\ell}^{(1)}u + v_{k,\ell}^{(1)}   
    ),
    \quad 
    (k \in \{1,2,\ldots,d \}),
\label{eq:bu}
\end{align}
where $\bs \psi_k:=\{w_{k,\ell}^{(2)},w_{k,\ell}^{(1)},v_k^{(2)},v_{k,\ell}^{(1)}\}_{\ell}$ is a set of weights to be estimated. 
As is widely acknowledged, the neural network $\bs b(u)$ possesses universal approximation capabilities, implying that $\bs b(u)$ can approximate any continuous function $\bs b_*(u)$ by increasing the number of hidden units $L \to \infty$~(refer to, for example, \citet{cybenko1989approximation}). 
Here, $\rho:\mathbb{R} \to \mathbb{R}$ denotes a smooth activation function. In general, we make the following assumptions:
\[
\text{(i)} \,  \rho \text{ is twice-differentiable,}
\quad \text{and} \quad
\text{(ii)} \, \rho^{[1]}_{\infty}:=\sup_{z \in \mathbb{R}}|\rho(z)'|<\infty.
\]
Common activation functions such as the sigmoid function $\rho(z)=1/(1+\exp(-z))$ and the hyperbolic tangent function $\rho(z)=\tanh(z):={\exp(z)-\exp(-z)}/{\exp(z)+\exp(-z)}$ satisfy the conditions (i) and (ii). It is important to note that if $w_{k,\ell}^{(2)}=0$ for all $\ell$, $b_k(u)$ reduces to a constant $v_k^{(2)}$, which is independent of $u$. Each entry of the derivative $\bs b^{[1]}(u)=(b_1^{[1]}(u),b_2^{[1]}(u),\ldots,b_d^{[1]}(u))$ of the NN $\bs b(u)$ is
\begin{align}
    b_k^{[1]}(u)
    :=
    \sum_{\ell=1}^{L}
    w_{k,\ell}^{(1)}
    w_{k,\ell}^{(2)}
    \rho^{[1]}(
    w_{k,\ell}^{(1)}u + v_{k,\ell}^{(1)}   
    ),
    \quad 
    (k \in \{1,2,\ldots,d\}).
\label{eq:bu1}
\end{align}
\end{itemize}

This study exclusively focuses on the use of the simple perceptron~\eqref{eq:bu} to obtain a straightforward sufficient condition for ensuring monotonicity, as described in Section~\ref{subsec:monotonicity}. While it is possible to employ a deep neural network instead, doing so would result in a more intricate sufficient condition.

\subsection[Monotonicity of N3POM]{Monotonicity of N$^3$POM} 
\label{subsec:monotonicity}

Given that the CCP $\mathbb{P}(H \le u \mid X=\bs x)$ should be non-decreasing with respect to $u \in \setU$ (for any fixed $\bs x \in \setX$), we consider the monotonicity of the N$^3$POM~\eqref{eq:cmccp}.
Equivalently, we focus on the monotonicity of the prediction model $f_u(\bs x)=a(u)+\langle \bs b(u),\bs x\rangle$.

\bigskip 
First,  we consider the function $a(u)$. 
Given that $f_u(\bs 0)=a(u)+\langle \bs b(u),\bs 0\rangle = a(u)$ should be monotone with respect to $u$, parameters in the function $a(u)$ should satisfy the inequality 
\begin{align}
\alpha_1 \le \alpha_2 \le \ldots \le \alpha_R;    
\label{eq:alpha_inequality}
\end{align}
we employ a re-parameterization 
\begin{align}
    \alpha_{r} 
    = 
    \alpha_{r} (\bs \varphi) 
    = 
    \begin{cases} 
    \phi & (r=1) \\
    \phi + \sum_{t=2}^{r} |\varphi_t| & (r=2,3,\ldots,R) \\
    \end{cases},
    \label{eq:re-parameterization}
\end{align}
equipped with newly-defined parameters $\bs \varphi = (\phi,\varphi_2,\varphi_3,\ldots,\varphi_R) \in \mathbb{R}^{R}$ to be estimated. 
By virtue of the aforementioned re-parameterization, the monotonicity inequality~\eqref{eq:alpha_inequality} always hods.

\bigskip 
Second, we consider the function $\bs b(u)$. 
We focus on the type of function $\bs b(u)$, which can satisfy the monotonicity constraint of the CCP. 
Unfortunately, we find that the function $\bs b(u)$ is limited to constant functions, if the N$^3$POM is monotone for all the covariates in the unbounded region, that is, the entire Euclidean space $\mathbb{R}^d$. See Proposition~\ref{prop:constant_f} with the proof shown in Supplement~\ref{supp:proofs}.  

\begin{proposition}
\label{prop:constant_f}
    Let $a:\setU \to \mathbb{R}$ be a function defined in \eqref{eq:au}, equipped with the re-parameterization \eqref{eq:re-parameterization}. 
    Let $\bs b:\setU \to \mathbb{R}^d$ be a continuous function. 
    If $f_u(\bs x):=a(u)+\langle \bs b(u),\bs x\rangle$ is non-decreasing with respect to $u \in \setU$ for any fixed $\bs x \in \mathbb{R}^d$, $\bs b(u)$ is a constant function. 
\end{proposition}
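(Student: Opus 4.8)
The plan is to exploit the fact that the monotonicity of $u \mapsto f_u(\bs x)$ is required to hold uniformly over the \emph{entire} space $\mathbb{R}^d$, together with the elementary fact that a nonzero linear functional on $\mathbb{R}^d$ is unbounded below. The guiding intuition is that the term $\langle \bs b(u), \bs x\rangle$ can be driven to $-\infty$ by sending $\bs x$ off to infinity in a suitable direction, so any change in $\bs b$ between two response levels $u_1 < u_2$ would eventually destroy the ordering $f_{u_1}(\bs x) \le f_{u_2}(\bs x)$ for some covariate. The whole argument is a direct contradiction; no regularity of $\bs b$ beyond what is stated is really exploited.

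Concretely, I would first fix two levels $u_1, u_2 \in \setU$ with $u_1 < u_2$ and write out the non-decreasing requirement $f_{u_1}(\bs x) \le f_{u_2}(\bs x)$. Substituting the definition of $f_u$ and collecting the $a$-terms on one side yields
\begin{equation*}
\langle \bs b(u_2) - \bs b(u_1),\, \bs x \rangle \;\ge\; a(u_1) - a(u_2) \;=\; -\bigl(a(u_2)-a(u_1)\bigr),
\end{equation*}
which must hold for every $\bs x \in \mathbb{R}^d$. Since $a$ is non-decreasing by construction — the re-parameterization \eqref{eq:re-parameterization} enforces the chain \eqref{eq:alpha_inequality} — the right-hand side is a fixed finite constant; write it as $-C$ with $C := a(u_2)-a(u_1) \ge 0$.

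The core step is the contradiction. Suppose $\bs b$ is not constant, so there exist $u_1 < u_2$ with $\bs v := \bs b(u_2) - \bs b(u_1) \ne \bs 0$. Choosing the covariate $\bs x = -s\,\bs v$ for a scalar $s > 0$ gives $\langle \bs v, \bs x\rangle = -s\|\bs v\|_2^2 \to -\infty$ as $s \to \infty$, which violates the lower bound $\langle \bs v, \bs x\rangle \ge -C$ for any finite $C$. Hence $\bs v = \bs 0$; since this holds for every such pair, $\bs b(u_1) = \bs b(u_2)$ for all $u_1, u_2 \in \setU$, i.e., $\bs b$ is constant.

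I do not expect a genuine technical obstacle here: the only point worth flagging is conceptual rather than computational, namely that the argument hinges entirely on $\bs x$ ranging over the unbounded $\mathbb{R}^d$. The harmless-looking bound $-C$ is defeated precisely because we may push $\bs x$ to infinity along $-\bs v$; were the covariate instead confined to a bounded region $\setX$, the inequality $\langle \bs v, \bs x\rangle \ge -C$ could be met by a nonzero $\bs v$, and the conclusion would fail. This is exactly what motivates restricting attention to a bounded, user-specified region in the development that follows. I would also note in passing that the stated continuity of $\bs b$ is not needed for the argument, as the contradiction is obtained separately at each pair $(u_1,u_2)$.
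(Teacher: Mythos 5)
Your proof is correct, and it takes a genuinely different route from the paper's. The paper argues by contradiction at the level of derivatives: it picks a point $(u',\bs x')$ where $\nabla_u\langle \bs b(u'),\bs x'\rangle \ne 0$, rescales the covariate to $-L\bs x'$ with $L$ large enough to overwhelm the (bounded) slope of $a$, and concludes that $f_u$ is locally decreasing there; it then needs continuity of $\bs b$ to pass from a vanishing derivative to constancy. You instead work with the two-point difference $f_{u_2}(\bs x)-f_{u_1}(\bs x)\ge 0$ directly, which reduces the hypothesis to the statement that the linear functional $\bs x\mapsto\langle \bs b(u_2)-\bs b(u_1),\bs x\rangle$ is bounded below on all of $\mathbb{R}^d$, forcing $\bs b(u_2)=\bs b(u_1)$. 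Your version is more elementary and strictly more general: it never differentiates $\bs b$ (the paper's proof tacitly assumes such derivatives exist, beyond the stated continuity), it does not use the specific piecewise-linear form or monotonicity of $a$ (only its finiteness at each point), and, as you note, it does not need continuity of $\bs b$ at all since constancy is established pairwise. Both proofs share the same underlying mechanism --- pushing $\bs x$ to infinity along the direction that makes the covariate term arbitrarily negative --- and your closing remark about why the argument collapses on a bounded covariate region matches the paper's motivation for Proposition~2.
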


While N$^3$POM cannot be uniformly monotone over the entire covariate Euclidean space, covariates are usually expected to distribute in a specific bounded region. 
Therefore, instead of the unbounded Euclidean space $\mathbb{R}^d$, we consider a closed ball region for $\bs x$:
\begin{align*} 
\setX_{2}(\xupper):=\{\bs x \in \mathbb{R}^d \mid \|\bs x\|_{2} \le \xupper \}
\end{align*}
equipped with a user-specified parameter $\xupper>0$, and we 
prove that N$^3$POM can be monotone for all $\bs x \in \setX_{2}(\xupper)$. 
A sufficient condition provided for proving the monotonicity is satisfying an inequality
\begin{align}
    \min_{r=2,3,\ldots,R}s_{r-1}
    \, \ge \,
    \xupper \cdot \rho^{[1]}_{\infty} \cdot 
    \sqrt{
        \sum_{k=1}^{d}
        \left\{
            \sum_{\ell=1}^{L} |w_{k,\ell}^{(2)} w_{k,\ell}^{(1)}|
        \right\}^2
    },
    \label{eq:NN_weights_inequality}
\end{align}
with $\rho^{[1]}_{\infty}:=\sup_{z \in \mathbb{R}}|\rho^{[1]}(z)|$. 
For instance, $\rho_{\infty}^{[1]}=1$ for $\rho(z)=\tanh(z)$. 
$s_{r-1}:=\{a_{r}-a_{r-1}\}/\{j_{r}-j_{r-1}\}=\varphi_r^2/\{j_r-j_{r-1}\}$ represents a slope of the function $a(u)$. 
The following proposition holds, with the proof given in Supplement~\ref{supp:proofs}. 
See Figure~\ref{fig:illustration_of_proposition} for illustration.

\begin{proposition}
\label{prop:monotonicity}
Let $a:\setU \to \mathbb{R}$ be a function defined in \eqref{eq:au}, equipped with the re-parameterization \eqref{eq:re-parameterization}. 
Let $\bs b:\setU \to \mathbb{R}^d$ be a neural network defined in \eqref{eq:bu}. 
Suppose the inequality~\eqref{eq:NN_weights_inequality} holds. 
Then, $f_u(\bs x)$ is non-decreasing with respect to $u \in \setU$, for any fixed $\bs x \in \setX_{2}(\xupper)$. 
\end{proposition}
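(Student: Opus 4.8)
The plan is to reduce the claimed monotonicity to the pointwise non-negativity of the weak derivative $f_u^{[1]}(\bs x) = a^{[1]}(u) + \langle \bs b^{[1]}(u), \bs x\rangle$, and then to bound the two summands separately. By the defining property of the weak derivative recalled in Section~\ref{subsec:symbols}, we have $f_u(\bs x) = f_1(\bs x) + \int_1^u f_{\tilde u}^{[1]}(\bs x)\,\diff \tilde u$ for every $u \in \setU$ and each fixed $\bs x$; hence it suffices to show $f_u^{[1]}(\bs x) \ge 0$ for almost every $u \in \setU$ and all $\bs x \in \setX_2(\xupper)$, from which the integral representation forces $u \mapsto f_u(\bs x)$ to be non-decreasing.

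For the first summand, on each partition cell $\setU_{r-1}$ the weak derivative $a^{[1]}(u)$ equals the constant slope $s_{r-1}$ by \eqref{eq:au1}; consequently $a^{[1]}(u) \ge \min_{r} s_{r-1}$ for almost every $u \in \setU$. For the second summand, I would apply the Cauchy--Schwarz inequality to obtain the lower bound $\langle \bs b^{[1]}(u), \bs x\rangle \ge -\|\bs b^{[1]}(u)\|_2 \|\bs x\|_2 \ge -\xupper \|\bs b^{[1]}(u)\|_2$, using $\|\bs x\|_2 \le \xupper$ on $\setX_2(\xupper)$.

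It then remains to control $\|\bs b^{[1]}(u)\|_2$ uniformly in $u$. Starting from \eqref{eq:bu1}, the triangle inequality together with assumption (ii), $|\rho^{[1]}| \le \rho^{[1]}_{\infty}$, yields the entrywise bound $|b_k^{[1]}(u)| \le \rho^{[1]}_{\infty} \sum_{\ell=1}^L |w_{k,\ell}^{(2)} w_{k,\ell}^{(1)}|$, which is independent of $u$. Squaring and summing over $k$ gives $\|\bs b^{[1]}(u)\|_2 \le \rho^{[1]}_{\infty} \sqrt{\sum_{k=1}^d (\sum_{\ell=1}^L |w_{k,\ell}^{(2)} w_{k,\ell}^{(1)}|)^2}$. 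Combining the three estimates, $f_u^{[1]}(\bs x) \ge \min_{r} s_{r-1} - \xupper \, \rho^{[1]}_{\infty} \sqrt{\sum_{k=1}^d (\sum_{\ell=1}^L |w_{k,\ell}^{(2)} w_{k,\ell}^{(1)}|)^2}$, and the right-hand side is non-negative exactly by hypothesis \eqref{eq:NN_weights_inequality}. This closes the argument.

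The main obstacle I anticipate is not any single estimate — each is a one-line application of Cauchy--Schwarz or the triangle inequality — but rather the care required at the reduction step, since $a$ is merely piecewise linear and therefore $f_u(\bs x)$ is not classically differentiable at the knots $j_2, \ldots, j_{R-1}$. The weak-derivative formalism is what makes the ``derivative non-negative $\Rightarrow$ non-decreasing'' implication rigorous here: the finitely many knots form a null set and do not affect the integral representation, so non-negativity of $f_u^{[1]}(\bs x)$ almost everywhere genuinely suffices to conclude monotonicity.
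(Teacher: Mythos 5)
Your proposal is correct and follows essentially the same route as the paper's proof: both reduce monotonicity to non-negativity of the weak derivative $f_u^{[1]}(\bs x)=a^{[1]}(u)+\langle\bs b^{[1]}(u),\bs x\rangle$, bound $a^{[1]}$ below by $\min_r s_{r-1}$, and control $|\langle\bs b^{[1]}(u),\bs x\rangle|$ via Cauchy--Schwarz together with $|\rho^{[1]}|\le\rho^{[1]}_{\infty}$ to invoke \eqref{eq:NN_weights_inequality}. The only difference is cosmetic (you bound each $|b_k^{[1]}(u)|$ entrywise before taking the Euclidean norm, whereas the paper keeps the $\rho^{[1]}$ factors inside the second Cauchy--Schwarz term), and your explicit justification of the ``weak derivative non-negative a.e.\ $\Rightarrow$ non-decreasing'' step via the integral representation is a welcome bit of extra care.
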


By specifying $\xupper > \max_i \|\bs x_i\|_{2}$, the estimated CCPs for all observed covariates $\{\bs x_i\}_{i=1}^{n}$ satisfy the monotonicity as $\{\bs x_i\}_{i=1}^{n} \subset \setX_2(\xupper)$. 
Note that the monotonicity may not be guaranteed (i.e., the log-likelihood to be optimized may not be well-defined) if $\xupper \le \max_i \|\bs x_i\|_2$. 
Therefore, in practice, we may specify $\eta := \max_i \|\bs x_i^{\text{train}}\|_2 + \varepsilon$ for training set and some $\varepsilon>0$. 
Although specifying $\varepsilon>0$ can be challenging depending on the problem setting, as long as we employ large enough $\varepsilon>0$ satisfying $\eta > \max_i \|\bs x_i^{\text{test}}\|_2$, estimated CCP is guaranteed to be monotone even for the test set.

\begin{figure}[!ht]
\centering 
\includegraphics[width=0.8\textwidth]{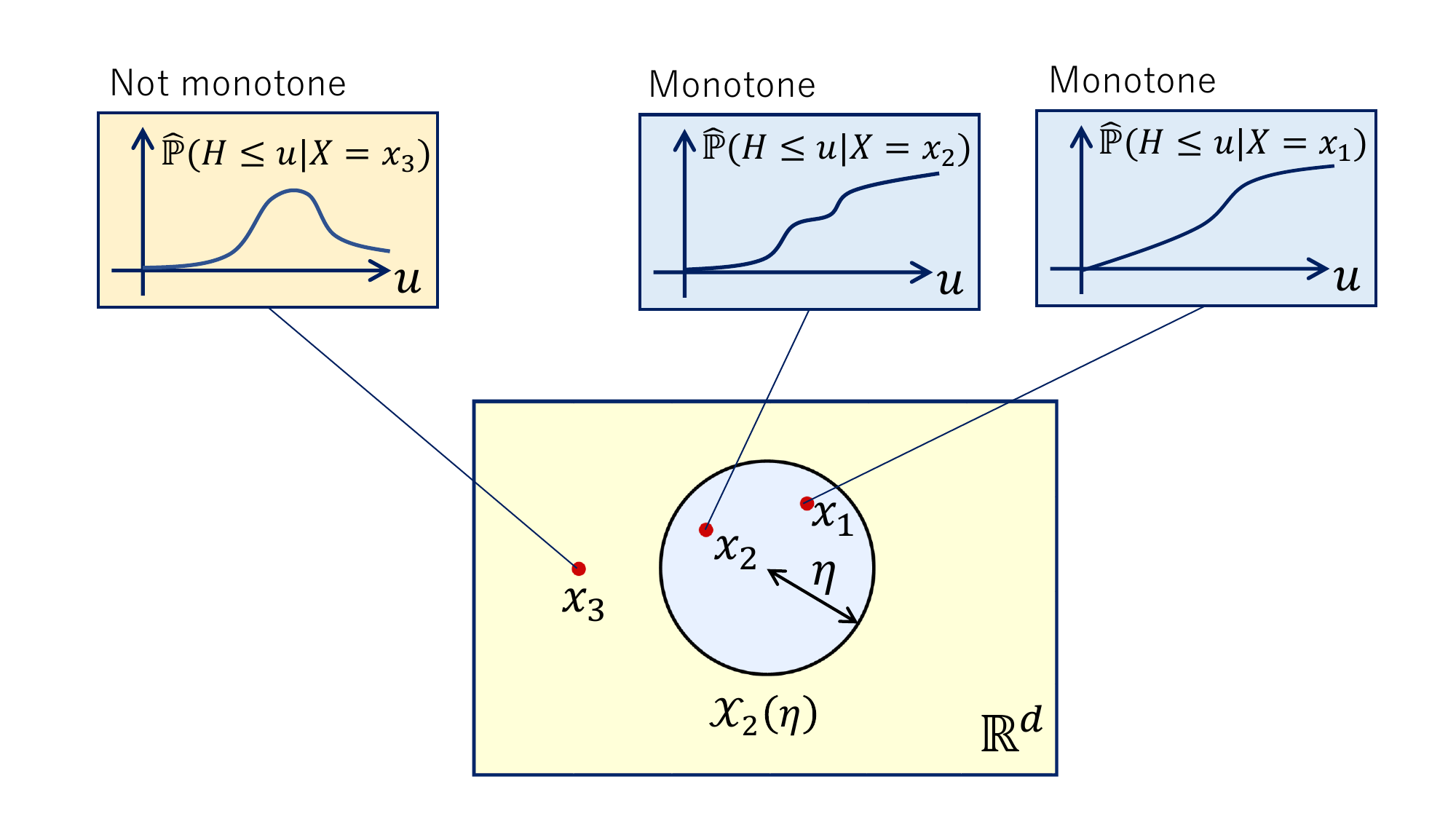}
\caption{Illustration of Proposition~\ref{prop:monotonicity}. 
The estimated CCP $\hat{\mathbb{P}}_{\cnpom}(H \le u \mid X=\bs x)=\sigma(\hat{f}_u(\bs x))$ is non-decreasing with respect to $u \in \setU$ (i.e., valid) if $\bs x \in \setX_2(\eta)$, while monotonicity is not guaranteed (i.e., invalid) if $\bs x \notin \setX_2(\eta)$. 
}
\label{fig:illustration_of_proposition}
\end{figure}

Here, we examine the relationship with Proposition~\ref{prop:constant_f}. When we set $\eta=\infty$ (which corresponds to $\setX_2(\xupper)=\mathbb{R}^d$) in inequality \eqref{eq:NN_weights_inequality}, we obtain the identity $\max_{k,\ell}|w_{k,\ell}^{(2)} w_{k,\ell}^{(1)}|=0$. This implies that either $w_{k,\ell}^{(2)}=0$ or $w_{k,\ell}^{(1)}=0$ for all $k,\ell$. Then, the function $b_k(u)$ becomes a constant function: $b_k(u) = \sum_{\ell=1}^{L} w_{k,\ell}^{(2)} \rho (v_{k,\ell}^{(1)}) + v_{k,\ell}^{(2)}$, as also mentioned in Proposition~\ref{prop:constant_f}. 
Conversely, selecting a smaller value for $\eta$ permits greater fluctuations in $\bs b(u)$ because the neural network weights ${w_{k,\ell}^{(2)},w_{k,\ell}^{(1)}}$ can have larger absolute values. Consequently, there exists a trade-off between model flexibility and the extent of the covariate region in which the CCP exhibits monotonic behavior.

The adverse property highlighted in Proposition~\ref{prop:constant_f} is not unique to our approach; it also applies to conventional NPOM methods. However, to the best of our knowledge, none of the existing studies have explicitly mentioned this proposition. Regarding the existing approaches, such as those discussed by \citet{tutz2022sparser}, \citet{lu2022continuously}, and \citet{Wurm2021-rv}, which penalize the distance between the coefficients and the prototype $\|\bs \beta_j-\bs \beta_*\|$, increasing the penalty weight corresponds to decreasing $\eta$ in our approach. Although they have made efforts to specify the penalty weights to ensure that the CCP maintains monotonicity over the observed covariates in the training dataset, they do not guarantee monotonicity in out-of-sample regions.

At last of this section, we turn our attention to a high-dimensional problem (i.e., $d \in \mathbb{N}$ is large). We assume that $\{\bs x_i\}_{i=1}^{n}$ are observed covariates i.i.d. drawn from a normal distribution $N(0,d^{-1}I)$ so that $\|\bs x_i\|_2=O_p(1)$ with $d \to \infty$, and we set $\eta:=\max_i \|x_i\|_2+\varepsilon$ for some small $\varepsilon>0$ (such that $\eta=O_p(1)$). In order to satisfy inequality \eqref{eq:NN_weights_inequality}, $|w_{k,\ell}^{(2)}w_{k,\ell}^{(1)}|$ should be of the order $d^{-1/2}$. Consequently, it is expected that $w_{k,\ell}^{(2)} \approx 0$ or $w_{k,\ell}^{(1)} \approx 0$ for all $k,\ell$, causing the coefficient function $b_k(u)$ to approximate a constant function: $b_k(u)\approx \sum_{\ell=1}^{L} w_{k,\ell}^{(2)}\rho(v_{k,\ell}^{(1)})+v_{k,\ell}^{(2)}$. 
Therefore, for high-dimensional problems, exploring more flexible coefficient functions that do not reduce to constant functions is a future research.

\subsection{Monotonicity-preserving stochastic~(MPS) algorithm}
\label{subsec:parameter_estimation}

Using the CPD $q(u \mid X=\bs x)$ defined in \eqref{eq:conditional_pdf}, we introduce a \emph{monotonicity-preserving stochastic}~(MPS) algorithm. The MPS algorithm is employed to estimate the parameters of the prediction model $f_u(\bs x)=a(u) + \langle \bs b(u),\bs x \rangle$ by optimally maximizing the weighted log-likelihood
\begin{align}
\ell_{\bs \zeta}(\bs \theta)
&:=
\sum_{i=1}^{n} \zeta_i \log q(h_i \mid X=\bs x_i)
\label{eq:proposed_log_lik} \\
&=
\sum_{i=1}^{n} \zeta_i \left\{
    \log \sigma^{[1]}(a(h_i) + \langle \bs b(h_i),\bs x_i \rangle))
    +
    \log (a^{[1]}(h_i) + \langle \bs b^{[1]}(h_i),\bs x_i \rangle))
\right\}, \nonumber 
\end{align}
under the sufficient condition constraint~\eqref{eq:NN_weights_inequality}. 
$\zeta_i \ge 0 \: (i \in \{1,2,\ldots,n\})$ are user-specified weights satisfying $\sum_{i=1}^{n} \zeta_i=1$ (e.g., $\zeta_1=\zeta_2=\cdots=\zeta_n=1/n$). 
For instance, we may employ $\zeta_i \propto \psi(\|\bs x_i-\bs \mu\|_2)$ with a robust mean $\bs \mu$ and some decreasing non-negative function $\psi$ for robust estimation as also discussed in \citet{croux2013robust}. 
\eqref{eq:proposed_log_lik} is a continuous variant of the log-likelihood for interval-censored data, which is typically used to train conventional NPOM. See Remark~\ref{remark:log-likelihood_of_NPOM}.

With an initial parameter $\bs \theta^{(0)}$, the MPS algorithm iteratively repeats the following two steps for $t=1,2,\ldots$ until convergence. 

\begin{enumerate}[{(i)}]

\item Compute a single step of the mini-batch gradient ascent algorithm~\citep{Goodfellow-et-al-2016} concerning the parameters $\bs \varphi=(\phi,\varphi_2,\ldots,\varphi_R)$ for $a(u)$ and $\bs \psi=\{w_{k,\ell}^{(2)},w_{k,\ell}^{(1)},v_k^{(2)},v_{k,\ell}^{(1)}\}_{k,\ell}$ for $\bs b(u)$. Explicit expressions for the gradients used in the gradient ascent can be found in Supplement~\ref{supp:gradient}.

\item With the coefficient
\[
    c = \min\left\{1, 
            \frac{
                \min_{r=2,3,\ldots,R}s_{r-1}
            }{
                    \xupper \cdot \rho^{[1]}_{\infty} \cdot 
    \sqrt{
        \sum_{k=1}^{d}
        \left\{
            \sum_{\ell=1}^{L} |w_{k,\ell}^{(2)} w_{k,\ell}^{(1)}|
        \right\}^2
    }
            }    
    \right\},
\]
we multiply $w_{k,\ell}^{(2)},w_{k,\ell}^{(1)}$ by $\sqrt{c}$ so as to satisfy the inequality \eqref{eq:NN_weights_inequality}, that is a sufficient condition to guarantee the monotonicity of the predicted CCP in a ball of radius $\eta$.
\end{enumerate}

Unlike the standard mini-batch gradient ascent algorithm, which consists only of step (i), step (ii) is essential to ensure the monotonicity of the predicted Conditional Cumulative Probability (CCP) $\hat{\mathbb{P}}_{\cnpom}(H \le u \mid X=\bs x)$ within a ball of radius $\eta$, 
as discussed in Section~\ref{subsec:monotonicity}. For adaptations to discrete responses, please refer to Supplement~\ref{supp:additive_perturbation}.

\begin{remark}[Relation to the likelihood for interval-censored data] 
\label{remark:log-likelihood_of_NPOM}
Define a function $\nu_i(\Delta)=\{\mathbb{P}(H \le h_i+\Delta \mid X=\bs x_i)-\mathbb{P}(H \le h_i \mid X=\bs x_i)\}/\Delta$ (satisfying $\lim_{\Delta \searrow 0}\nu_i(\Delta)=q(h_i \mid X=\bs x_i)$) and 
$j_1=1,j_2=2,\ldots,j_R=R=J$. 
Accordingly, the log-likelihood~\eqref{eq:proposed_log_lik} employed in this study corresponds to 
$\sum_{i=1}^{n} \zeta_i \log \{\lim_{\Delta \searrow 0}\nu_i(\Delta)\}$, while a widely-used log-likelihood for interval-censored data (i.e., $\{h_i\}$ taking values over the set $\{1,2,\ldots,J-1\}$) can be regarded as its discrete approximation $\sum_{i=1}^{n} \zeta_i \log \nu_i(1)$. See, e.g., \citet{simpson1996interval}. 
\end{remark}

\section{Experiments on Synthetic Datasets}
\label{sec:experiments_synthetic}

This section provides numerical experiments using synthetic datasets. 
Also see Supplement~\ref{supp:additional_experiments_synthetic} for additional experiments. 
\verb|R| source codes to reproduce the experimental results are provided in \url{https://github.com/oknakfm/N3POM}. 

\subsection{Synthetic datasets}
\label{subsec:synthetic_datasets}

In this experiment, we set $n=1000,d=2,J=7$. 
For the covariate $X$, we generate $r_1,r_2,\ldots,r_n \sim U([0,1])$ and 
$\theta_1,\theta_2,\ldots,\theta_n \sim U([0,2\pi))$ uniformly and randomly, and compute $\bs x_i=(x_{ij}) = (r_i \cos \theta_i, r_i \sin \theta_i) \in \mathbb{R}^2$. 
We consider the functions 
$a_*(u)=2u-9$ and 
\[
\bs b_*(u)=(b_{*1}(u),b_{*2}(u))=(-1+m_1 u^2 \, , \, 1+m_2 u^2)
\]
so that the continuous responses $h_1,h_2,\ldots,h_n$ are generated based on conditional distribution
$\mathbb{P}(H \le h_i \mid X = \bs x_i)
    =
    \sigma \left(
        a_*(h_i) + \langle \bs b_*(h_i),\bs x_i \rangle
    \right)$. 
$\{h_i\}$ are generated through inversion sampling in our implementation. 
To follow our setting, we further truncate $u$ to take values in the interval $\setU=[1,J]$ (using the function $\argmin_{\tilde{u} \in [1,J]}|u-\tilde{u}$), with $J=7$. The truncation is equivalent to $\max\{u, 1\},u,\min\{J,u\}$ if $u<1, 1\le u \le J, u>J$, respectively. 
Note that the truncation has only a small impact on parameter estimation in our experimental setting, as the randomly generated $u$ rarely falls outside the interval $[1,7]$ (i.e., truncation rarely occurs). 
For instance, the probability of truncation is approximately $0.0063$ in our experiments with $m_1=0.05,m_2=-0.05$, as discussed in Supplement~\ref{app:truncation_probability}. 
The observed covariates $\bs x_1,\bs x_2,\ldots,\bs x_n$ lie in the unit disk $\setX_2(1)=\{\bs x \in \mathbb{R}^2 \mid \|\bs x\|_2 \le 1\}$; the underlying function $f_u(\bs x)=a_*(u) + \langle \bs b_*(u),\bs x \rangle$ is non-decreasing with respect to $u \in \setU$ for all $\bs x \in \setX_2(1)$, as $f^{[1]}_u(\bs x)=\nabla_u f_u(\bs x)=2 - 2 m_1 u x_1 + 2 m_2 u x_2 \ge 0$. 
To compute the baselines, we discretize the observed continuous responses $h_i$ as
\begin{align}
\round{h_i}:=\argmin_{j \in \{1,2,\ldots,J\}} \|h_i-j\|_2;
\label{eq:discretization}
\end{align}
We train the POM and NPOM by leveraging $\round{h_i}$. 
Furthermore, we train the proposed N$^3$POM with $h_i,\round{h_i}$ and $\cut{\round{h_i}}$ for comparison, where $\cut{\cdot}$ is the random perturbation operator defined in Supplement~\ref{supp:additive_perturbation}.

\subsection{Experimental settings}
\paragraph{Model architecture:} We employ the proposed N$^3$POM~\eqref{eq:cmccp} defined in Section~\ref{subsec:proposed_model}. 
We specify $R=24$ for the function $a(u)$ and employ the regular intervals $1=j_1 < j_2 < j_3 < \cdots < j_{24}=7$; the reason behind choosing $R$ to be smaller than $n$ is to maintain the stability of the N$^3$POM optimization process. 
The number of hidden units in the neural network $\bs b(u)$ is $L=50$. 
The sigmoid activation function $\rho(z)=1/(1+\exp(-z))$ is also employed.

\paragraph{Initialization:} 
First, we compute the coefficient vectors $\hat{\bs \beta}_1,\hat{\bs \beta}_2,\ldots,\hat{\bs \beta}_{J-1}$ of the discrete NPOM by leveraging \verb|serp| package in \verb|R| language. 
Next, we initialize the neural network parameters so that the NN outputs approximate the \verb|serp| outputs over the discrete points (i.e., $b_k(j) \approx \hat{\beta}_{jk}$). 
See Supplement~\ref{supp:initialization} for details. 
The parameters of the function $a(u)$ are also parameterized by linear interpolation of  \verb|serp| outputs.

\paragraph{Optimization:} 
The weights in the log-likelihood~\eqref{eq:proposed_log_lik} are specified as 
$\zeta_i \propto 1/n_{r_i}^{0.5}$, where 
$n_r := |\{i : g_i \in \setU_r\}|$ and $g_{i} \in \setU_{r_i}$. 
To maximize the log-likelihood, we employ the MPS algorithm equipped with $\eta:=\max_{i=1,2,\ldots,n}\|\bs x_i\|_2+10^{-2}$. 
A mini-batch of size $16$ is uniformly and randomly selected from training samples (without replacement), and the number of iterations is $5000$. The learning rate is multiplied by $0.95$ for each $50$ iteration. 
Based on these settings, we apply the MPS algorithm to the following three types of observed responses: $h_i, \cut{\round{h_i}}, \round{h_i}$.

\paragraph{Baselines:} 
We utilize the following major implementations of ordinal regression in the \verb|R| language: 
(1) \verb|polr| function (logistic model) from the \verb|MASS| package, 
(2) \verb|ordinalNet| (\verb|oNet| function) from the ordinalNet package~\citep{Wurm2021-rv}, with options for non-proportional terms, cumulative logit, and hyperparameter $\alpha \in {0, 0.5, 1}$, 
(3) \verb|serp| function from the \verb|serp| package~\citep{ugba2021serp}, using a logit link and the "penalize" slope option. 
These implementations are employed for training the Proportional Odds Model (POM), Non-Proportional Odds Model (NPOM), and NPOM with both proportional and non-proportional terms (NPOM$^{\dagger}$). 
They are trained by maximizing the likelihood for interval-censored responses described in Remark~\ref{remark:log-likelihood_of_NPOM}. 
For \verb|ordinalNet|, the coefficients $\bs \beta_j$ are decomposed into $\bs \beta_*$, representing the proportional term, and ${\bs \delta_j}$, representing non-proportional terms. We compute NPOM with only the non-proportional term (referred to as NPOM) and NPOM with both proportional and non-proportional terms (NPOM$^{\dagger}$). 
Regarding penalty terms, \verb|ordinalNet| applies penalties as $\alpha\|\bs \beta\|_1 + (1-\alpha)\|\bs \beta\|_2^2 + \sum_{j=1}^{J-1}\{\alpha \|\bs \delta_j\|_1+(1-\alpha)\|\bs \delta_j\|_2^2\}$, where $\alpha$ values of 0, 0.5, and 1 correspond to ridge, elastic net, and lasso penalties, respectively. 
\verb|serp| penalizes adjacent coefficients as $\|\bs \beta_j - \bs \beta_{j-1}\|_2^2$. These baselines are applied to the discretized observed responses $\round{h_i}$.

\paragraph{Evaluation:} 
for each estimated coefficient $\hat{b}_k(u)$, we compute mean squared error~(MSE):
\[
    \text{MSE}(\hat{b}_k)
    :=
    \frac{1}{|\tilde{\setU}|}
    \sum_{\tilde{u} \in \tilde{\setU}}
    \{b_{*k}(\tilde{u}) - \hat{b}_{k}(\tilde{u})\}^2
    \quad (k=1,2),
\]
with $\tilde{\setU}:=\{1,1.05,1.1,1.15,1.2,\ldots,J\}$. 
For evaluating the POM and discrete NPOM, we employ linear interpolation for computing $\hat{b}_k(u)$ for non-integer $u$. 
We compute the MSE for each of the $20$ times experiments. 
However, it is possible for parameters to occasionally remain in severe local minima during neural network training, particularly when conducting multiple training runs. 
To mitigate this, we calculate the (robust) average of MSEs of the 20 experimental runs by removing the single highest and single lowest values in each experimental configuration. 
We also compute the standard deviation of MSE after removing the top/bottom 1 highest/lowest values for each setting.

\subsection{Results}

Experimental results for $\bs x_i=(r_i \cos \theta_i, r_i \sin \theta_i), \: r_i \sim U([0,1]), \theta_i \sim U([0,2\pi))$ are summarized in Table~\ref{table:synthetic_setting1}. 
For all the cases, the N$^3$POM trained using the observed continuous response $h_i$ shows the best scores. 
The N$^3$POM applied to the discretized observed response $\round{h_i}$ shows scores that are nearly equal to \verb|polr|. 
Even if the response is discretized, the score gets closer to that of the observed continuous response by leveraging the adaptation to the discrete responses shown in Supplement~\ref{supp:additive_perturbation}. 
The scores of the NPOM implemented by \verb|serp| package follow the scores of the N$^3$POM. 
\verb|ordinalNet| demonstrates the subsequent scores. 
Also see Supplement~\ref{supp:additional_experiments_synthetic} for additional experiments.

\begin{table}[!ht]
\centering
\caption{ 
Results of the MSE experiments on synthetic datasets, where the observed covariates are generated by the setting (i) $x_i=(r_i \cos \theta_i, r_i \sin \theta_i), r_i \sim U([0,1]),\theta_i \sim U([0,2\pi))$. 
Both coefficients $b_{*1}(u)=-1+m_1u^2,b_{*2}(u)=1+m_2u^2$ are response-dependent (i.e., not constant).
For the 20 experiments for each setting, the robust average and standard deviation (shown in parenthesis) for MSEs, are computed. The best score is \best{bolded and blue-colored}, while the second best score is \second{bolded and red-colored}.}
\label{table:synthetic_setting1}
\scalebox{0.85}{
\begin{tabular}{rrr|cc|cc}
\toprule 
\multirow{2}{*}{Model} & \multirow{2}{*}{Optimizer} & \multirow{2}{*}{Response} & \multicolumn{2}{c|}{$(m_1,m_2)=(0.05,-0.05)$} & \multicolumn{2}{c}{$(m_1,m_2)=(0.05,0.05)$} \\
 & & & MSE($\hat{b}_1$) & MSE($\hat{b}_2$) & MSE($\hat{b}_1$) & MSE($\hat{b}_2$) \\
\hline
\rowcolor{gray!30}
N$^3$POM & MPS & $h_i$ & \best{0.066} (0.155) & \best{0.122} (0.147)  & \best{0.052} (0.062) & \best{0.134} (0.138) \\
\rowcolor{gray!30}
N$^3$POM & MPS & $\cut{\round{h_i}}$ & \second{0.116} (0.060) & \second{0.163} (0.081) & \second{0.084} (0.098) & \second{0.177} (0.117) \\
\rowcolor{gray!30}
N$^3$POM & MPS & $\round{h_i}$ & 0.516 (0.044) & 0.527 (0.020) & 0.530 (0.040) & 0.525 (0.040) \\
\hline
POM & \EscVerb{polr} & $\round{h_i}$ & 0.516 (0.026) & 0.514 (0.017) & 0.514 (0.030) & 0.524 (0.034) \\ 
NPOM & (ridge) \EscVerb{oNet} & $\round{h_i}$ & 0.233 (0.037) & 0.265 (0.073) &   0.356 (0.030) & 2.572 (0.169) \\
NPOM & (elastic) \EscVerb{oNet} & $\round{h_i}$ & 0.243 (0.170) & 0.270 (0.166) & 0.215 (0.070) & 0.229 (0.101) \\
NPOM & (lasso) \EscVerb{oNet} & $\round{h_i}$ & 0.209 (0.151) & 0.266 (0.173) & 0.237 (0.085) & 0.223 (0.105) \\
NPOM$^{\dagger}$ & (ridge) \EscVerb{oNet} & $\round{h_i}$ & 0.253 (0.055) & 0.270 (0.070) & 0.418 (0.023) & 1.129 (0.080) \\
NPOM$^{\dagger}$ & (elastic) \EscVerb{oNet} & $\round{h_i}$ & 0.262 (0.161) & 0.274 (0.144) & 0.198 (0.093) & 0.209 (0.089) \\
NPOM$^{\dagger}$ & (lasso) \EscVerb{oNet} & $\round{h_i}$ & 0.261 (0.153) & 0.265 (0.179) & 0.206 (0.097) & 0.243 (0.120) \\
NPOM & \EscVerb{serp} & $\round{h_i}$ & 0.174 (0.066) & 0.204 (0.079) & 0.130 (0.074) & 0.186 (0.074) \\
\bottomrule
\end{tabular}
}
\end{table}

\section{Experiments on Real-World Datasets}
\label{sec:experiments_real}

In this section, we train N$^3$POM by leveraging real-world datasets. 
We show and interpret the results of the experiments for autoMPG6 and real-estate datasets. 
Also see Supplement~\ref{supp:additional_experiments} for the results of autoMPG8, boston-housing, concrete, and airfoil datasets. 
\verb|R| source codes to reproduce the experimental results are provided in \url{https://github.com/oknakfm/N3POM}.

\subsection{Real-world datasets}
\label{subsec:real-world_datasets}

We employ the following datasets collected from the UCI machine learning repository~\citep{dua2019UCI}.

\paragraph{autoMPG6 ($n=392, d=5$).} 
The autoMPG6 dataset consists of five covariates (``Displacement'' (continuous), ``Horse\_power'' (continuous), ``Weight'' (continuous), ``Acceleration'' (continuous), ``Model\_year'' (discrete)), and continuous response ``mpg''. ``mpg'' stands for miles per gallon, representing fuel efficiency.

\paragraph{real-estate ($n=413, d=3$).} 
Among the six covariates in the original real-estate dataset, we employ the three covariates 
``X2:house age'', ``X3:distance to the nearest MRT station'', and ``X4:number of convenience stores'', and  remove the following covariates: ``X1:transaction date'', ``X5:latitude'', and ``X6:longtitude''. 
They are renamed to ``House\_age'', ``Dist\_to\_station'', and ``Num\_of\_conv\_stores'' in our experiments. 
The response variable represents the house price of the unit area. Thus, we rename this variable as ``house price oua''. 
For meaningful computation, we remove the 271st column because its observed response is an outlier whose difference from the mean of the remaining responses is farther than six times the standard deviation.

\bigskip
For each dataset, observed covariates are standardized (centering and scaling). 
Responses are linearly transformed so that $\min_i h_i=1, \max_i h_i=10$ (i.e., we set $J=10$). After computing N$^3$POM, we recover the original response for the plot by applying the inverse linear transformation. 
See Supplement~\ref{supp:additional_experiments} for N$^3$POM applied to autoMPG8~($n=392,d=7$), boston-housing~($n=502,d=12$), concrete~($n=1030,d=8$), and airfoil~($n=1503,d=5$) datasets, and 
Supplement~\ref{supp:summary_of_datasets} for pairwise scatter plots of the observed covariates in these datasets.

\subsection{Experimental settings}
Initialization and optimization are the same as the experiments in Section~\ref{sec:experiments_synthetic}, while we employ $R=20$ with equal intervals $1=j_1<j_2<\cdots<j_R=J=10$ in real-world dataset experiments. 

First, we compute the \verb|serp| function by leveraging the rounded responses $\round{h_i}$. 
Initialized by this \verb|serp| output (see Supplement~\ref{supp:initialization}), we train the neural network by the MPS algorithm. 
For considering the randomness of choosing mini-batches, we compute 10 different stochastic optimization results (i.e., 10 different random seeds for stochastic optimization) in each plot.  
For comparison, we also plot the initial neural network (approximating the preliminarily computed \verb|serp| output) and POM coefficients trained using the \verb|polr| function.

\subsection{Results}

Estimated coefficients for each dataset are shown in Figures~\ref{fig:autoMPG6}--\ref{fig:boston-housing}. 
For interpretation, we inverse the sign of the estimated coefficients $\hat{\bs b}(u)$ because we have a probability that the response exceeds the threshold value $u$ as: 
\begin{align*}
    \text{logit}
    \left(
    \hat{\mathbb{P}}_{\cnpom}(H > u \mid X=\bs x)
    \right)
    &=
    \text{logit}
    \left(
    1-\hat{\mathbb{P}}_{\cnpom}(H \le u \mid X=\bs x) 
    \right)
    =
    \hat{r}(u)+\langle \hat{\bs s}(u),\bs x\rangle
\end{align*}
with $\hat{r}(u)=-\hat{a}(u)$ and $\hat{\bs s}(u) \, = \, (\hat{s}_1(u),\hat{s}_2(u),\ldots,\hat{s}_d(u)) \, = \, -\hat{\bs b}(u)$. Therefore, the larger $\hat{s}_k(u)=-\hat{b}_k(u)$ (corresponding to smaller $\hat{b}_k(u)$) indicates a larger response if the corresponding observed covariate is large.

\paragraph{autoMPG6} 
We consider the results of autoMPG6 ($n=392, d=5$) dataset shown in Figures \ref{fig:autoMPG6}. 
The coefficients $\hat{s}_k(u)$ of ``Displacement'', ``Horse\_power'', and ``Weight'' are negative. Therefore, taking higher displacement (or horse power/weight) indicates lower fuel efficiency. 
Their negative values are also decreasing along with the mpg. This descendence indicates that for cars with a higher mpg, the negative association between these covariates and the mpg is even stronger. 
The remaining covariates $\hat{s}_k(u)$ for ``Acceleration'' and ``Model\_year'' are increasing, which indicates that the relationship between these variables and fuel efficiency is more positive for more fuel-efficient cars. Particularly, the coefficient function for ``Model\_year'' is always positive, so the newness of the car implies better mpg for cars with all the range of mpg. 
Also see Figure~\ref{fig:autoMPG8} in Supplement~\ref{supp:additional_experiments} for the result of autoMPG8 dataset; the result is consistent with that of autoMPG6, and it indicates the robustness of the proposed N$^3$POM against the additional covariates.

\begin{figure}[!h]
\centering
\includegraphics[width=\textwidth]{./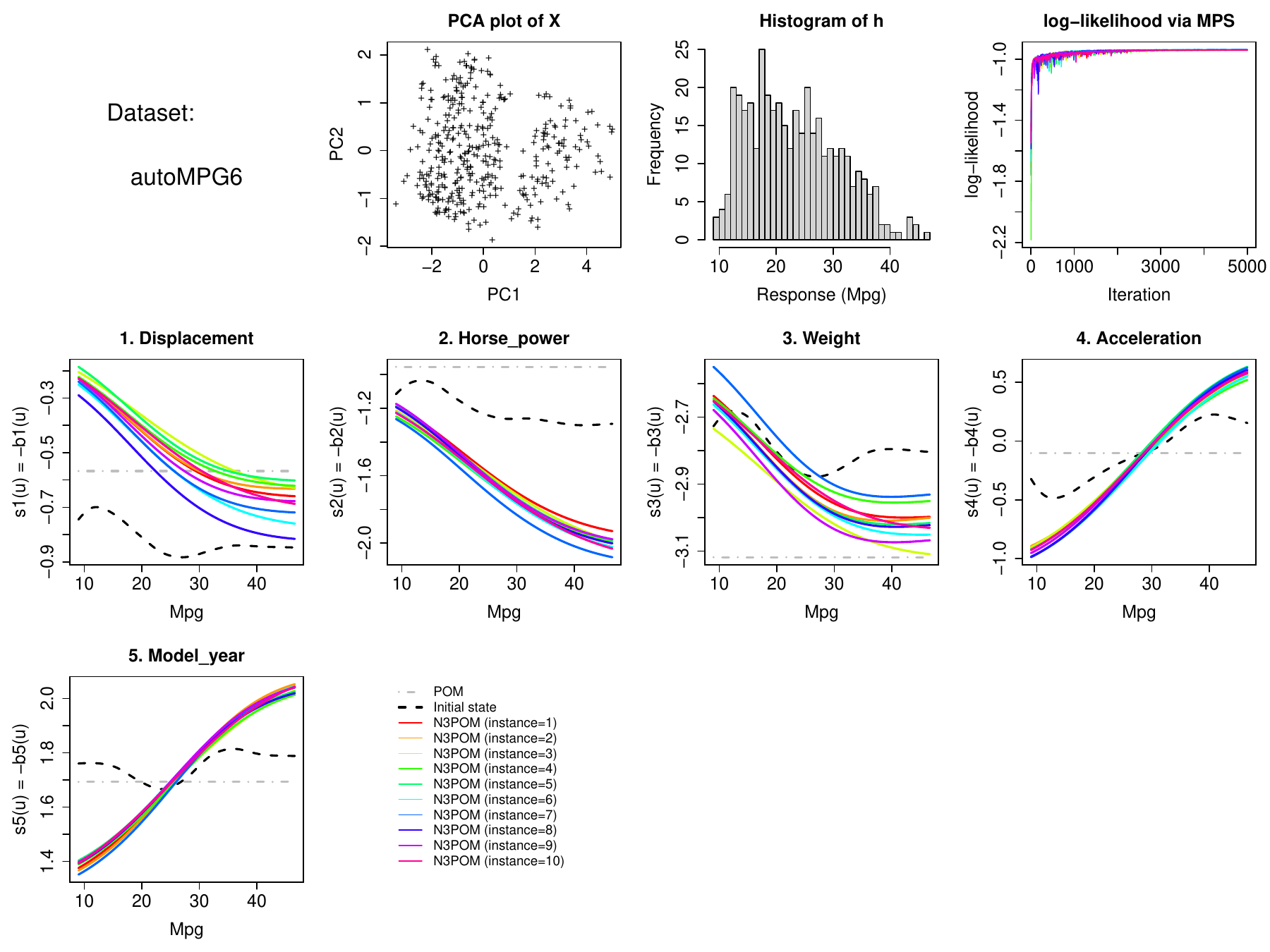}
\caption{autoMPG6 dataset experiment. 
The dashed line represents the coefficient function using the untrained (initial) NN output. 
Separate $10$ curves represent the coefficient functions with different random seeds for stochastic optimization. As these curves seem almost the same in all experiments, we can assert that the functions are estimated robustly against the stochastic optimization procedure. 
}
\label{fig:autoMPG6}
\end{figure}

\paragraph{real-estate.} 
The results of real-estate ($n=413, d=3$) dataset are shown in Figure \ref{fig:real-estate-main}(\subref{fig:real-estate}). 
For the second covariate, ``Dist\_to\_station'', the increasing distance adversely affects the house price. Moreover, the degree of the adverse effect increases for more expensive houses. 
The third covariate, ``Num\_of\_conv\_store'' positively affects the house price. However, the degree of the positive effect decreases for more expensive houses.  
For the first covariate, ``House\_age'', the age of the house adversely affects the house price for lower-price houses; however, this effect almost vanishes (as $\hat{s}_1(u)$ approaches $0$ as $u$ increases) for higher price houses. See Figure~\ref{fig:real-estate-main}(\subref{fig:real-estate-levels}) for the scatter plots between covariates and house price.  For instance, we can observe that the house age does not seem to adversely affect the house price for higher price houses, while it seems to have a slightly negative effect when considering lower-priced houses.

\begin{figure}[!h]
\centering
\begin{minipage}{\textwidth}
\centering
\includegraphics[width=\textwidth]{./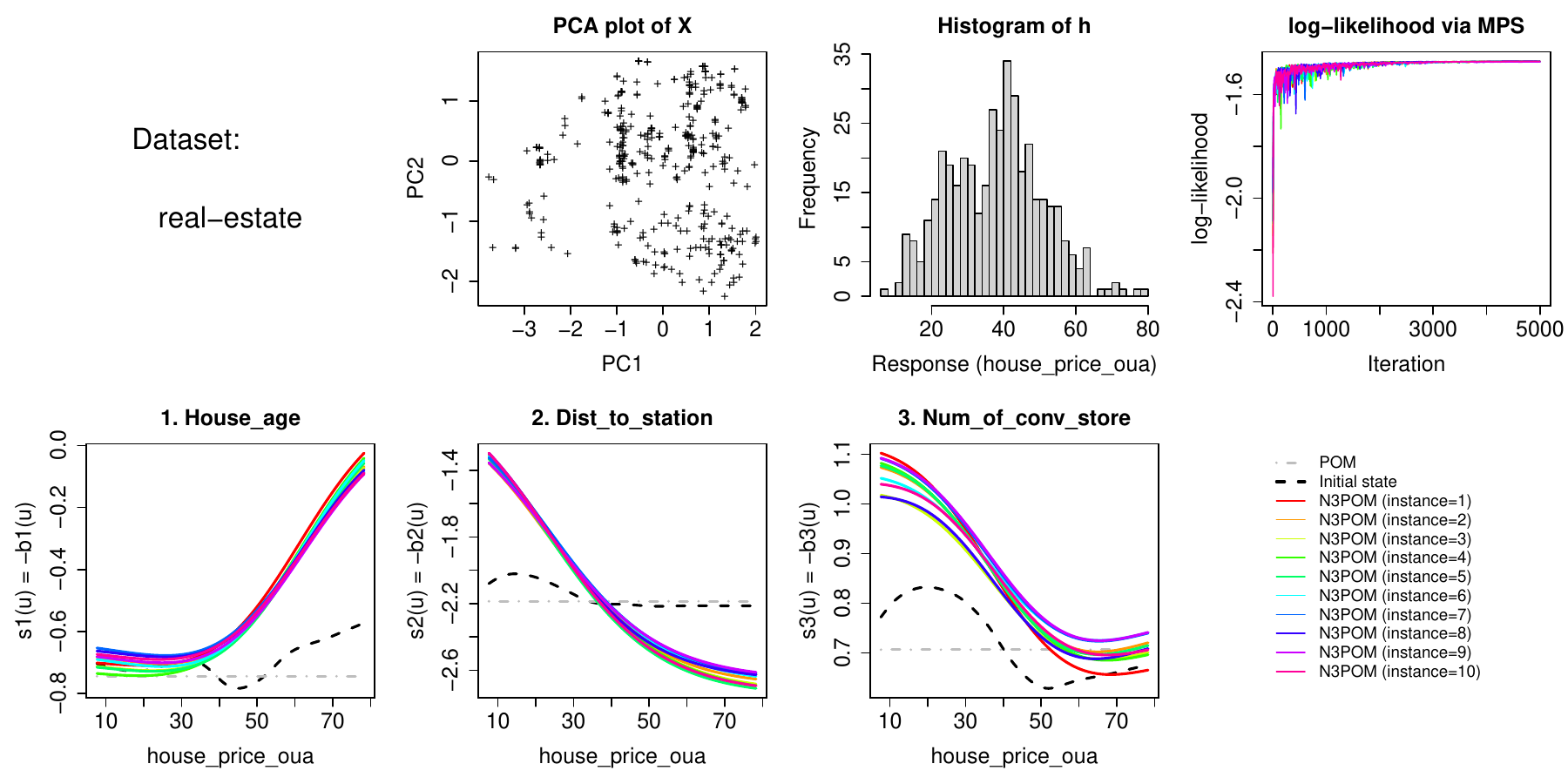}
\subcaption{Experimental results.}
\label{fig:real-estate}
\end{minipage} 
\hspace{1em}
\begin{minipage}{\textwidth}
\centering
\includegraphics[width=0.8\textwidth]{./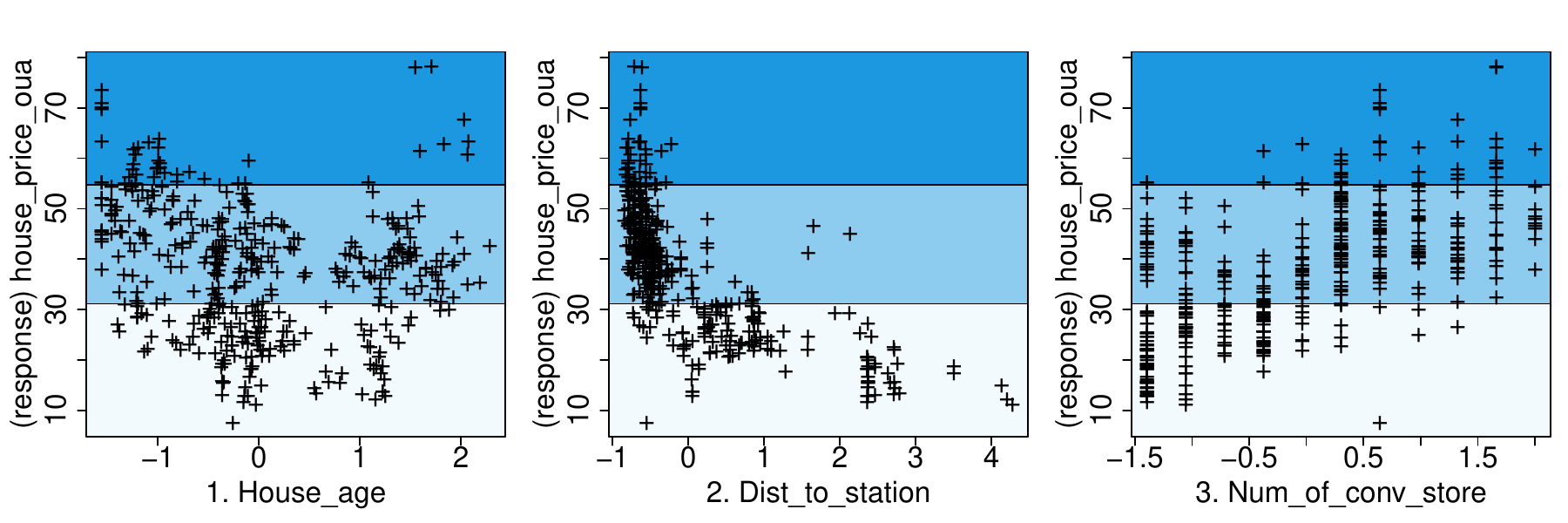}
\subcaption{Scatter plots between covariates (house age, distance to station, and number of stores; $x$-axis) and response (house price of unit area; $y$-axis). Higher price, middle price, and lower price houses are separately colored. 
}
\label{fig:real-estate-levels}
\end{minipage}
\caption{real-estate dataset.}
\label{fig:real-estate-main}
\end{figure}

\bigskip 
Please note that the descriptions of the datasets, along with the learning curves, are presented in the first row of Figures \ref{fig:autoMPG6} and \ref{fig:real-estate-main} respectively. 
The regression results of several additional datasets are also provided in Supplement~\ref{supp:additional_experiments}. 
Additionally, see Supplement~\ref{supp:summary_of_datasets} for pairwise scatter plots of all datasets used in this study, and 
Supplement~\ref{supp:marginal_effects} for the discussion on marginal effects~\citep{agresti2018simple}.

\section{Future Reseach Directions}
As a potential future research, it would be valuable to relax the sufficient condition~\eqref{eq:NN_weights_inequality} that ensures monotonicity.
As discussed in Section~\ref{subsec:monotonicity}, the current condition~\eqref{eq:NN_weights_inequality} is hard to be satisfied for higher-dimensional covariates.
Another possible direction is to develop statistical inference for N$^3$POM-based estimation.
For example, assessing the reliability of the coefficient functions would be a critical problem.
The final direction we would like to highlight is the application of (group) sparse penalties on the log-likelihood, which would encourage certain neural network weights to become zero. In this case, the neural network $b_k(u)$ may become a constant function in some cases, potentially helping to prevent overfitting of non-proportional models.

\newpage
\appendix

\noindent\textbf{{\Large Supplementary material:}} \\
\textbf{An interpretable neural network-based non-proportional odds model} \\
\textbf{for ordinal regression} \\
{A. Okuno (\textit{okuno@ism.ac.jp}) and K. Harada (\textit{haradak@tokyo-med.ac.jp}).}

\section{Gradient}
\label{supp:gradient}

With respect to the parameter $\bs \theta$, gradients $\nabla_{\bs \theta}f_u,\nabla_{\bs \theta}f^{[1]}_u$ of the prediction model $f_u$ and its weak derivative $f^{[1]}_u$ are shown in Supplement~\ref{supp:gradient_f}, \ref{supp:gradient_f1}, respectively; together with these gradients, we show the gradient of the log-likelihood in Supplement~\ref{supp:gradient_log-likelihood}.

\subsection{Gradient of the regression model}
\label{supp:gradient_f}

Let 
\[
    \trunc{z} 
    = 
    \begin{cases}
        0 & (z < 0) \\
        z & (z \in [0,1]) \\
        1 & (z > 1) \\
    \end{cases}, 
    \quad 
    z \in \mathbb{R}. 
\]
Considering the identity 
\[
    a(u) 
    =
    \phi 
    +
    \sum_{r=2}^{R} |\varphi_r| 
    \trunc{\frac{u-j_{r-1}}{j_r-j_{r-1}}},
\]
the gradient $\nabla_{\bs \theta} f_u(\bs x_i)$ is obtained element-wise as: 
for $\rdagger \in \{2,3,\ldots,R\},\kdagger \in [d],\ldagger \in [L]$ and $u \in \setU$, 
\begin{align*}
\frac{\partial}{\partial \phi} f_u(\bs x_i)
&=
1, \\
\frac{\partial}{\partial \varphi_{\rdagger}} f_u(\bs x_i)
&=
\sign(\varphi_{\rdagger})
\trunc{\frac{u-j_{\rdagger-1}}{j_{\rdagger}-j_{\rdagger-1}}}, \\
\frac{\partial}{\partial v_{\kdagger,\ldagger}^{(1)}} f_u(\bs x_i)
&=
    x_{i\kdagger}
    \frac{\partial}{\partial v_{\kdagger,\ldagger}^{(1)}}
    b_{\kdagger}(u) 
=
    x_{i\kdagger}
    w_{\kdagger,\ldagger}^{(2)}
    \rho^{[1]}( w_{\kdagger,\ldagger}^{(1)} u + v_{\kdagger,\ldagger}^{(1)} ), \\
\frac{\partial}{\partial v_{\kdagger}^{(2)}} f_u(\bs x_i)
&=
    x_{i\kdagger}
    \frac{\partial}{\partial v_{\kdagger,\ldagger}^{(2)}}
    b_{\kdagger}(u) 
=
    x_{i\kdagger},\\    
\frac{\partial}{\partial w_{\kdagger,\ldagger}^{(1)}} f_u(\bs x_i)
&=
    x_{i\kdagger}
    \frac{\partial}{\partial w_{\kdagger,\ldagger}^{(1)}}
    b_{\kdagger}(u) 
=
    x_{i\kdagger}
    w_{\kdagger,\ldagger}^{(2)} u
    \rho^{[1]}( w_{\kdagger,\ldagger}^{(1)} u + v_{\kdagger,\ldagger}^{(1)} ), \\
\frac{\partial}{\partial w_{\kdagger,\ldagger}^{(2)}} f_u(\bs x_i)
&=
    x_{i\kdagger}
    \frac{\partial}{\partial w_{\kdagger,\ldagger}^{(2)}}
    b_{\kdagger}(u) 
=
    x_{i\kdagger}
    \rho( w_{\kdagger,\ldagger}^{(1)} u + v_{\kdagger,\ldagger}^{(1)} ). 
\end{align*}

\subsection{Gradient of the weak derivative of the prediction model}
\label{supp:gradient_f1}

Considering the identities 
\[
a^{[1]}(u)
=
\sum_{r=2}^{R}
\mathbbm{1}(u \in \setU_{r-1})
\frac{|\varphi_r|}{j_{r}-j_{r-1}}, 
\quad 
b_k^{[1]}(u)
=
\sum_{\ell=1}^{L}
w_{k,\ell}^{(2)}
w_{k,\ell}^{(1)}
\rho^{[1]}(w_{k,\ell}^{(1)}u + v_{k,\ell}^{(1)}), 
\]
the gradient $\nabla_{\bs \theta} f_u^{[1]}(\bs x)$ of the weak derivative $f_u^{[1]}(\bs x)$ of the prediction model $f_u(\bs x)$ is obtained element-wise as follows: for $\rdagger \in \{2,3,\ldots,R\},\kdagger \in [d], \ldagger \in [L]$ and $u \in \setU$, we have 
\begin{align*}
\frac{\partial}{\partial \phi} f_u^{[1]}(\bs x_i)
&=
0, \\
\frac{\partial}{\partial \varphi_{\rdagger}} f_u^{[1]}(\bs x_i)
&=
\mathbbm{1}(u \in \setU_{\rdagger-1})
\frac{\sign(\varphi_{\rdagger})}{j_{\rdagger}-j_{\rdagger-1}}, \\
\frac{\partial}{\partial v_{\kdagger,\ldagger}^{(1)}} f_u^{[1]}(\bs x_i)
&=
    x_{i\kdagger}
    \frac{\partial}{\partial v_{\kdagger,\ldagger}^{(1)}}
    b_{\kdagger}^{[1]}(u) 
=
    x_{i\kdagger}
    w_{\kdagger,\ldagger}^{(2)}
    w_{\kdagger,\ldagger}^{(1)}
    \rho^{[2]}( w_{\kdagger,\ldagger}^{(1)} u + v_{\kdagger,\ldagger}^{(1)} ), \\
\frac{\partial}{\partial v_{\kdagger}^{(2)}} f_u^{[1]}(\bs x_i)
&=
    x_{i\kdagger}
    \frac{\partial}{\partial v_{\kdagger,\ldagger}^{(2)}}
    b_{\kdagger}^{[1]}(u) 
=
    0,\\    
\frac{\partial}{\partial w_{\kdagger,\ldagger}^{(1)}} f_u^{[1]}(\bs x_i)
&=
    x_{i\kdagger}
    \frac{\partial}{\partial w_{\kdagger,\ldagger}^{(1)}}
    b_{\kdagger}^{[1]}(u) 
=
    x_{i\kdagger}
    w_{\kdagger,\ldagger}^{(2)}
    w_{\kdagger,\ldagger}^{(1)}
    u
    \rho^{[2]}( w_{\kdagger,\ldagger}^{(1)} u + v_{\kdagger,\ldagger}^{(1)}) \\
    &\hspace{17em}
    +
    x_{i\kdagger}
    w_{\kdagger,\ldagger}^{(2)}
    \rho^{[1]}( w_{\kdagger,\ldagger}^{(1)} u + v_{\kdagger,\ldagger}^{(1)} ), \\
\frac{\partial}{\partial w_{\kdagger,\ldagger}^{(2)}} f_u^{[1]}(\bs x_i)
&=
    x_{i\kdagger}
    \frac{\partial}{\partial w_{\kdagger,\ldagger}^{(2)}}
    b_{\kdagger}^{[1]}(u) 
=
    x_{i\kdagger}
    w_{\kdagger,\ldagger}^{(1)}
    \rho^{[1]}( w_{\kdagger,\ldagger}^{(1)} u + v_{\kdagger,\ldagger}^{(1)} ). 
\end{align*}

\subsection{Gradient of the log-likelihood}
\label{supp:gradient_log-likelihood}

Using the conditional probability density function~\eqref{eq:conditional_pdf}, we get

\begin{align*}
\nabla_{\bs \theta} \ell_{\bs \zeta}(\bs \theta)
&=
\sum_{i=1}^{n} \zeta_i \nabla_{\bs \theta}\log q(h_i \mid \bs x_i) \\
&=
\sum_{i=1}^{n} \zeta_i \nabla_{\bs \theta}
\{
    \log \sigma^{[1]}(f_{h_i}(\bs x_i))
    +
    \log f_{h_i}^{[1]}(\bs x_i)
\} \\
&=
\sum_{i=1}^{n} \zeta_i 
\left\{
    \frac{\sigma^{[2]}(f_{h_i}(\bs x_i))}{\sigma^{[1]}(f_{h_i}(\bs x_i))}
    \nabla_{\bs \theta}f_{h_i}(\bs x_i)
    +
    \frac{1}{f_{h_i}^{[1]}(\bs x_i)}\nabla_{\bs \theta} f_{h_i}^{[1]}(\bs x_i)
\right\}.
\end{align*}

Together with the gradient of $f_u$ and $f_u^{[1]}$, the gradient of the log-likelihood is obtained element-wise as follows: 
For $\rdagger \in \{2,3,\ldots,R\},\kdagger \in [d], \ldagger \in [L]$, we have 
\begin{align*}
\frac{\partial}{\partial \phi} \ell_{\bs \zeta}(\bs \theta)
&=
\sum_{i=1}^{n} \zeta_i \frac{\sigma^{[2]}(f_{h_i}(\bs x_i))}{\sigma^{[1]}(f_{h_i}(\bs x_i))}, \\
\frac{\partial}{\partial \varphi_{\rdagger}} \ell_{\bs \zeta}(\bs \theta)
&=
\sign(\varphi_{\rdagger})
\sum_{i=1}^{n} \zeta_i \frac{\sigma^{[2]}(f_{h_i}(\bs x_i))}{\sigma^{[1]}(f_{h_i}(\bs x_i))}
\trunc{\frac{h_i-j_{\rdagger-1}}{j_{\rdagger}-j_{\rdagger-1}}} \\
&\hspace{3em}
+
\frac{\sign(\varphi_{\rdagger})}{j_{\rdagger}-j_{\rdagger-1}}
\sum_{i=1}^{n} \zeta_i
\mathbbm{1}(h_i \in \setU_{\rdagger-1})
\frac{1}{f_{h_i}^{[1]}(\bs x_i)}, \\
\frac{\partial}{\partial v_{\kdagger,\ldagger}^{(1)}} \ell_{\bs \zeta}(\bs \theta)
&=
\sum_{i=1}^{n} \zeta_i \frac{\sigma^{[2]}(f_{h_i}(\bs x_i))}{\sigma^{[1]}(f_{h_i}(\bs x_i))}
x_{i\kdagger}
    w_{\kdagger,\ldagger}^{(2)}
    \rho^{[1]}( w_{\kdagger,\ldagger}^{(1)} h_i + v_{\kdagger,\ldagger}^{(1)} ) \\
&\hspace{3em}+
\sum_{i=1}^{n} \zeta_i
\frac{1}{f_{h_i}^{[1]}(\bs x_i)}
 x_{i\kdagger}
    w_{\kdagger,\ldagger}^{(2)}
    w_{\kdagger,\ldagger}^{(1)}
    \rho^{[2]}( w_{\kdagger,\ldagger}^{(1)} h_i + v_{\kdagger,\ldagger}^{(1)} ), \\
\frac{\partial}{\partial v_{\kdagger}^{(2)}} \ell_{\bs \zeta}(\bs \theta)
&=
\sum_{i=1}^{n} \zeta_i \frac{\sigma^{[2]}(f_{h_i}(\bs x_i))}{\sigma^{[1]}(f_{h_i}(\bs x_i))}
x_{i\kdagger}, \\
\frac{\partial}{\partial w_{\kdagger,\ldagger}^{(1)}} \ell_{\bs \zeta}(\bs \theta)
&=
\sum_{i=1}^{n} \zeta_i \frac{\sigma^{[2]}(f_{h_i}(\bs x_i))}{\sigma^{[1]}(f_{h_i}(\bs x_i))}
x_{i\kdagger}
    w_{\kdagger,\ldagger}^{(2)} h_i
    \rho^{[1]}( w_{\kdagger,\ldagger}^{(1)} h_i + v_{\kdagger,\ldagger}^{(1)} ) \\
&\hspace{3em}+
\sum_{i=1}^{n} \zeta_i
\frac{1}{f_{h_i}^{[1]}(\bs x_i)}
\bigg\{
x_{i\kdagger}
    w_{\kdagger,\ldagger}^{(2)}
    w_{\kdagger,\ldagger}^{(1)}
    h_i
    \rho^{[2]}( w_{\kdagger,\ldagger}^{(1)} h_i + v_{\kdagger,\ldagger}^{(1)} ) \\
&\hspace{18em}
    +
    x_{i\kdagger}
    w_{\kdagger,\ldagger}^{(2)}
    \rho^{[1]}( w_{\kdagger,\ldagger}^{(1)} h_i + v_{\kdagger,\ldagger}^{(1)} )
\bigg\}, \\
\frac{\partial}{\partial w_{\kdagger,\ldagger}^{(2)}} \ell_{\bs \zeta}(\bs \theta)
&=
\sum_{i=1}^{n} \zeta_i \frac{\sigma^{[2]}(f_{h_i}(\bs x_i))}{\sigma^{[1]}(f_{h_i}(\bs x_i))}
x_{i\kdagger}
    \rho( w_{\kdagger,\ldagger}^{(1)} h_i + v_{\kdagger,\ldagger}^{(1)} ) \\
&\hspace{3em}+
\sum_{i=1}^{n} \zeta_i
\frac{1}{f_{h_i}^{[1]}(\bs x_i)}
x_{i\kdagger}
    w_{\kdagger,\ldagger}^{(1)}
    \rho^{[1]}( w_{\kdagger,\ldagger}^{(1)} h_i + v_{\kdagger,\ldagger}^{(1)} ). 
\end{align*}

\section{Initialization}
\label{supp:initialization}

With given parameter vectors $\hat{\bs \beta}_{j}=(\hat{\beta}_{j1},\hat{\beta}_{j2},\ldots,\hat{\beta}_{jd})^{\top}$ ($j=1,2,\ldots,J$), which are preliminarily computed using the existing algorithm for (discrete) NPOM, we initialize the neural network parameters to satisfy
\begin{align}
    b_k(j) \approx \hat{\beta}_{jk},
    \quad (j=1,2,\ldots,J; k=1,2,\ldots,d).
    \label{eq:distillation_equality}
\end{align}
In our implementation, we employ the vectors $\hat{\bs \beta}_1,\hat{\bs \beta}_2,\ldots,\hat{\bs \beta}_{J-1}$ computed by \verb|serp| package in \verb|R| language and specify $\hat{\bs \beta}_J=\hat{\bs \beta}_{J-1}+(\hat{\bs \beta}_{J-1}-\hat{\bs \beta}_{J-2})=2\hat{\bs \beta}_{J-1}-\hat{\bs \beta}_{J-2}$ formally.

To satisfy the equality~\eqref{eq:distillation_equality}, we employ an NN using sigmoid activation function $\rho(z)=1/(1+\exp(-z))$ and $L>d$; with a sufficiently large constant $T$ (e.g., $T=10$, satisfying $\rho(-T) \approx 0,\rho(+T) \approx 1$), we define
\begin{align*}
    v_k^{(2)}&= \frac{1}{J} \sum_{j=1}^{J}\hat{\beta}_{jk}, \\
     v_{k,\ell}^{(1)}&=\begin{cases}  -T \ell & (\ell \in \{1,2,\ldots,J\}) \\ 0 & (\text{Otherwise}) \end{cases}, \\
    w_{k,\ell}^{(1)}&=\begin{cases} T & (\ell \in \{1,2,\ldots,J\}) \\ 0 & (\text{Otherwise}) \end{cases}, \\
    w_{k,\ell}^{(2)}&=\begin{cases} 
    \dfrac{\hat{\beta}_{\ell k}-v_k^{(2)} - \sum_{\ell’=1}^{\ell-1}w_{k,\ell’}^{(2)}}{\rho(0)}
    &(\ell \in \{1,2,\ldots,J\}) \\ 0 &(\text{Otherwise}) \end{cases}, 
\end{align*}
for all $k \in \{1,2,\ldots,d\}$. 
Next, we have
\begin{align*}
b_k(j)
&=
\sum_{\ell=1}^{L} w_{k,\ell}^{(2)} \rho \left(
    w_{k,\ell}^{(1)}j+v_{k,\ell}^{(1)}
\right) + v_k^{(2)} \\
&=
\sum_{\ell=1}^{J} w_{k,\ell}^{(2)} \rho \left(
    T(j-\ell)
\right) + v_k^{(2)} \\
&\approx 
\sum_{\ell=1}^{J} w_{k,\ell}^{(2)}
\left\{
    \rho(-\infty) \mathbbm{1}(j <\ell)
    +
    \rho(0) \mathbbm{1}(\ell=j) 
    +
    \rho(\infty) \mathbbm{1}(\ell<j)
\right\} + v_k^{(2)} \\
&=
w_{k,j}^{(2)} \rho(0) 
+
\sum_{\ell=1}^{j-1} w_{k,\ell}^{(2)} + v_k^{(2)} \\&=
\dfrac{\hat{\beta}_{jk}-v_k^{(2)} - \sum_{\ell=1}^{j-1}w_{k,\ell}^{(2)}}{\rho(0)} \rho(0)
+
\sum_{\ell=1}^{j-1} w_{k,\ell}^{(2)} + v_k^{(2)} \\
&=
\hat{\beta}_{jk}.
\end{align*}

For more stability in the NN training in our implementation, the non-zero weights in $\{w_{k,\ell}^{(2)}\}$ divided by $T=\max_{t \in \mathbb{N}}\{L/J \ge t\}$ are duplicated $T$ times. Furthermore, the non-zero weights in $\{w_{k,\ell}^{(1)}\},\{v_{k,\ell}^{(1)}\}$ are also duplicated $T$ times. I.i.d. standard normal random numbers are added to the remaining zero-weights. 
Following this duplication, the weights are more uniformly distributed compared to the setting in which only a few non-zero weights exist. Accordingly, the subsequent neural network training is expected to be more stable.

\section{Proofs}
\label{supp:proofs}

\begin{proof}[Proof of Proposition~\ref{prop:constant_f}]
We employ a proof by contradiction. 
Given that $\setU$ is a compact set, $s:=\esssup_{u \in [1,J]}a^{[1]}(u)>0$ holds. 
Suppose there exists $(u',\bs x')\in \setU \times \mathbb{R}^d$ such that 
$|\nabla_u \langle \bs b(u'),\bs x' \rangle|>0$. 
Thus, we may assume $t_{u',\bs x'}:=\nabla_u \langle \bs b(u'),\bs x' \rangle>0$ without loss of generality (consider $-\bs x'$ if $\nabla_u \langle \bs b(u'),\bs x' \rangle<0$). 
Next, by taking $L:=(s/t_{u',\bs x'})+1$, we get
\[
    \nabla_u f_{u'}(-L\bs x')
    <
    s-L\nabla_u \langle \bs b(u'),\bs x' \rangle 
    <
    s - (\{s/t_{u',\bs x'}\}+1)t_{u',\bs x'}
    =
    -t_{u',\bs x'}
    <
    0
\]
almost surely. 
At point $-L\bs x'$, $f_u(-L\bs x)$ is decreasing with respect to $u$, over a sufficiently small open ball around $u'$. Thus, a contradiction is derived. Accordingly, $|\nabla_u \langle \bs b(u),\bs x \rangle|=0$ holds for all $(u,\bs x) \in \setU \in \mathbb{R}^d$. 
Together with the continuity of the function $\bs b$, $\bs b(u)$ is a constant function. 
\end{proof}

\begin{proof}[Proof of Proposition~\ref{prop:monotonicity}]
Given that the weak derivative is obtained as
\begin{align*}
\langle \bs b^{[1]}(u),\bs x \rangle
&=
\sum_{k=1}^{d} x_{k} b_k^{[1]}(u)
=
\sum_{k=1}^{d} x_{k} 
\sum_{\ell=1}^{L} w_{k,\ell}^{(2)} w_{k,\ell}^{(1)} 
\rho^{[1]} (w_{k,\ell}^{(1)}u + v_{k,\ell}^{(1)}),
\end{align*}
The Cauchy-Schwarz inequality proves an inequality
\begin{align}
\left| 
    \langle \bs b^{[1]}(u),\bs x \rangle
\right|
&\le 
    \sqrt{\sum_{k=1}^{d} x_{k}^2}
    \sqrt{
        \sum_{k=1}^{d}
        \left\{
            \sum_{\ell=1}^{L} w_{k,\ell}^{(2)} w_{k,\ell}^{(1)} 
            \rho^{[1]} (w_{k,\ell}^{(1)}u + v_{k,\ell}^{(1)})
        \right\}^2
    } \nonumber \\
&< 
    \xupper \cdot \rho^{[1]}_{\infty} \cdot 
    \sqrt{
        \sum_{k=1}^{d}
        \left\{
            \sum_{\ell=1}^{L} |w_{k,\ell}^{(2)} w_{k,\ell}^{(1)}|
        \right\}^2
    }
    \label{eq:cauchy-schwarz}
\end{align}
for $\bs x=(x_1,x_2,\ldots,x_d) \in \setX_2(\eta)$. This inequality indicates that
\begin{align*}
     f_u^{[1]}(\bs x)
=
    a^{[1]}(u)
    +
    \langle \bs b^{[1]}(u),\bs x \rangle
\overset{\eqref{eq:cauchy-schwarz}}{\ge}  
    \min_{r=2,3,\ldots,R-1}s_{r-1}
    -
    \xupper \cdot \rho^{[1]}_{\infty} \cdot 
    \sqrt{
        \sum_{k=1}^{d}
        \left\{
            \sum_{\ell=1}^{L} |w_{k,\ell}^{(2)} w_{k,\ell}^{(1)}|
        \right\}^2
    }
\overset{\eqref{eq:NN_weights_inequality}}{\ge} 0,
\end{align*}
where the non-negativity of the right-most side is obtained based on the inequality \eqref{eq:NN_weights_inequality}. 
Given that $f^{[1]}_u$ is a weak derivative of the function $f_u$, $f_u$ is non-decreasing. 
Given that the above calculation holds for all $\bs x$, the continuity of $f_u$ proves the assertion.  
\end{proof}

\section{Probability of truncation}
\label{app:truncation_probability}

We first evaluate a probability $p(\bs x) := \mathbb{P}(H \notin [1,J] \mid X=\bs x)$ with the setting $m_1=0.05,m_2=-0.05$ considered in Table~\ref{table:synthetic_setting1}: 
\begin{align*}
p(\bs x)
&=
1-\mathbb{P}(1 \le H \le J \mid X=\bs x) \\
&=
1-\{\mathbb{P}(H \le J \mid X=\bs x) - \mathbb{P}(H \le 1 \mid X=\bs x)\} \\
&=
1-\{\sigma(a_*(J)+\langle \bs b_*(J),\bs x\rangle) - \sigma(a_*(1)+\langle \bs b_*(1),\bs x \rangle)\} \\
&=
\sigma(-\{a_*(J)+\langle \bs b_*(J),\bs x\rangle\}) + \sigma(a_*(1)+\langle \bs b_*(1),\bs x \rangle).
\end{align*}
Substituting $a_*(u)=2u-9,\bs b_*(u)=(-1+m_1 u^2,1+m_2 u^2),J=7$, and $m_1=0.05,m_2=-0.05$ to the above formula leads to 
\begin{align*}
p(\bs x)
=
\sigma(- \{5+1.45(x_1-x_2)\})
-
\sigma(-7-0.95(x_1-x_2)). 
\end{align*}
Considering $x_1-x_2=r(\cos \theta-\sin\theta)=r(\cos\theta-\cos(\pi/2-\theta))=2r\cos(\pi/4)\cos(\theta-\pi/4)=2^{-1/2}r\cos(\theta-\pi/4)$, we have 
\begin{align*}
p(\bs x)
&=
\sigma(- 5 - 2^{-1/2} \times 1.45r \cos \psi\})
-
\sigma(- 7 - 2^{-1/2} \times 0.95r \cos \psi\}) 
\end{align*}
with $\psi:=\theta-\pi/4$. Therefore, we finally obtain the probability 
\[
    \mathbb{P}(H \notin [1,J])
    =
   \frac{1}{2\pi} \iint p(\bs x) \diff r \diff \psi
   =
   0.0063\cdots
\]
by conducting the Monte-Carlo integration.

\section{Additional experiments: synthetic datasets}
\label{supp:additional_experiments_synthetic}

We first conduct the additional experiments with the same setting as Section~\ref{sec:experiments_synthetic}, while at least one of $b_{*1}(u)=-1+m_1u^2,b_{*2}(u)=1+m_2u^2$ is a constant function, that is, either $m_1=0$ or $m_2=0$.
See Table~\ref{table:synthetic_setting1_supp} for the results. 
Similarly to the case of response-dependent coefficients (i.e., $m_1,m_2 \ne 0$), the N$^3$POM trained using the discretized response demonstrates a similar score to that of POM using \verb|polr| package (though we initialized the neural network using \verb|serp| package). 
For the constant coefficients (corresponding to $m_1=0$ or $m_2=0$), the POM and N$^3$POM trained using the discrete responses demonstrate the best and second-best scores. 
This is because the POM assumes the constant coefficients, which is why the estimation variance is significantly smaller than that of the more flexible models, NPOM and N$^3$POM.

\begin{table}[H]
\centering
\caption{ 
Results of the MSE experiments on synthetic datasets, where the observed covariates are generated by the setting $x_i=(r_i \cos \theta_i, r_i \sin \theta_i), r_i \sim U([0,1]),\theta_i \sim U([0,2\pi))$. At least one of $b_{*1}(u)=-1+m_1u^2,b_{*2}(u)=1+m_2 u^2$ is a constant function, that is, either $m_1=0$ or $m_2=0$. For the 20 experiments for each setting, the robust average and standard deviation (shown in parenthesis) are computed. The best score is \best{bolded and blue-colored}, while the second best score is \second{bolded and red-colored}}
\label{table:synthetic_setting1_supp}
\scalebox{0.85}{
\begin{tabular}{rrr|cc|cc}
\toprule 
\multirow{2}{*}{Model} & \multirow{2}{*}{Optimizer} & \multirow{2}{*}{Response} & \multicolumn{2}{c|}{$(m_1,m_2)=(0.05,0)$} & \multicolumn{2}{c}{$(m_1,m_2)=(0,0)$} \\
 & & & MSE($\hat{b}_1$) & MSE($\hat{b}_2$) & MSE($\hat{b}_1$) & MSE($\hat{b}_2$) \\
\hline
\rowcolor{gray!30}
N$^3$POM & MPS & $h_i$ & \best{0.083} (0.164) & 0.075 (0.048) & 0.046 (0.034) & 0.052 (0.060) \\
\rowcolor{gray!30}
N$^3$POM & MPS & $\cut{\round{h_i}}$ & \second{0.110} (0.129) & 0.041 (0.042) & 0.035 (0.048) & 0.036 (0.059) \\
\rowcolor{gray!30}
N$^3$POM & MPS & $\round{h_i}$ & 0.526 (0.037) & \second{0.007} (0.020) & \best{0.004} (0.016) & \second{0.011} (0.019) \\
POM & \EscVerb{polr} & $\round{h_i}$ & 0.513 (0.029) & \best{0.002} (0.023) & \second{0.008} (0.018) & \best{0.002} (0.019) \\
NPOM & (ridge) \EscVerb{oNet} & $\round{h_i}$ & 0.310 (0.053) & 0.533 (0.063) & 0.559 (0.044) & 0.561 (0.044) \\
NPOM & (elastic) \EscVerb{oNet} & $\round{h_i}$ & 0.221 (0.081) & 0.178 (0.145) & 0.286 (0.124) & 0.192 (0.131) \\ 
NPOM & (lasso) \EscVerb{oNet} & $\round{h_i}$ & 0.211 (0.098) &  0.198 (0.161) & 0.321 (0.127) & 0.228 (0.129) \\
NPOM$^{\dagger}$ & (ridge) \EscVerb{oNet} & $\round{h_i}$ & 0.388 (0.034) & 0.117 (0.034) & 0.121 (0.037) & 0.125 (0.039) \\
NPOM$^{\dagger}$ & (elastic) \EscVerb{oNet} & $\round{h_i}$ & 0.391 (0.153) & 0.022 (0.057) & 0.013 (0.033) & 0.028 (0.022) \\
NPOM$^{\dagger}$ & (lasso) \EscVerb{oNet} & $\round{h_i}$ & 0.277 (0.155) & 0.034 (0.032) & 0.011 (0.034) & 0.028 (0.040) \\
NPOM & \EscVerb{serp} & $\round{h_i}$ & 0.186 (0.077) & 0.027 (0.033) & 0.011 (0.019) & 0.020 (0.037) \\
\bottomrule
\end{tabular}
}
\end{table}

We also conduct the experiments with different covariates: we generate $x_{ij} \sim \text{Beta}(0.5,0.5)$ ($j=1,2$) i.i.d. uniformly and randomly. $\text{Beta}(t_1,t_2)$ denotes Beta distribution, whose density is proportional to $x^{t_1-1}(1-x)^{t_2-1}$. 
Experimental results are summarized in Table~\ref{table:synthetic_setting2}. 
Overall, the tendency is the same as in the aforementioned settings; N$^3$POM trained using the observed continuous response $h_i$ demonstrates the best scores for estimating almost all non-constant coefficients. Moreover, the POM and N$^3$POM trained using the discrete response $\round{h_i}$ demonstrate the best scores for estimating the constant coefficients.

\begin{table}[H]
\centering
\caption{Results of the MSE experiments on synthetic datasets, where the observed covariates are $x_{ij} \sim \text{Beta}(0.5,0.5)$. 
Among the 20 experiments for each setting, the robust average and standard deviation (shown in parenthesis) are computed. The best score is \best{bolded and blue-colored}, while the second best score is \second{bolded and red-colored}}
\label{table:synthetic_setting2}
\begin{subtable}{\linewidth}
\centering
\subcaption{Both coefficients $b_{*1}(u)=-1+m_1u^2,b_{*2}(u)=1+m_2u^2$ are response-dependent.}
\label{table:synthetic_setting2_nonconstant}
\scalebox{0.85}{
\begin{tabular}{rrr|cc|cc}
\toprule 
\multirow{2}{*}{Model} & \multirow{2}{*}{Optimizer} & \multirow{2}{*}{Response} & \multicolumn{2}{c|}{$(m_1,m_2)=(0.05,-0.05)$} & \multicolumn{2}{c}{$(m_1,m_2)=(0.05,0.05)$} \\
 & & & MSE($\hat{b}_1$) & MSE($\hat{b}_2$) & MSE($\hat{b}_1$) & MSE($\hat{b}_2$) \\
\hline
\rowcolor{gray!30}
N$^3$POM & MPS & $h_i$ & \best{0.179} (0.143) & \second{0.265} (0.206) & \second{0.200} (0.092) & \best{0.300} (0.202) \\
\rowcolor{gray!30}
N$^3$POM & MPS & $\round{h_i}$ & 0.250 (0.135) & 0.297 (0.148) & \best{0.186} (0.112) & \second{0.305} (0.164) \\
\rowcolor{gray!30}
N$^3$POM & MPS & $\cut{\round{h_i}}$ & 0.559 (0.055) & 0.599 (0.069) & 0.524 (0.024) & 0.563 (0.049) \\
POM  & \EscVerb{polr} & $\round{h_i}$ & 0.522 (0.033) & 0.549 (0.045) & 0.521 (0.034) & 0.553 (0.037)\\
NPOM & (ridge) \EscVerb{oNet} & $\round{h_i}$ & 0.251 (0.070) & 0.278 (0.049) & 0.414 (0.043) & 3.239 (0.075) \\
NPOM & (elastic) \EscVerb{oNet} & $\round{h_i}$ & 0.481 (0.182) & 0.461 (0.187) & 0.350 (0.242) & 1.386 (0.780)  \\
NPOM & (lasso) \EscVerb{oNet} & $\round{h_i}$ & 0.482 (0.255) & 0.439 (0.219) & 0.416 (0.418) & 1.502 (0.906) \\ 
NPOM$^{\dagger}$ & (ridge) \EscVerb{oNet} & $\round{h_i}$ & 0.284 (0.085) & 0.318 (0.084) & 0.450 (0.026) & 1.456 (0.104) \\
NPOM$^{\dagger}$ & (elastic) \EscVerb{oNet} & $\round{h_i}$ & 0.477 (0.148) & 0.364 (0.155) & 0.443 (0.12) & 0.571 (0.081) \\
NPOM$^{\dagger}$ & (lasso) \EscVerb{oNet} & $\round{h_i}$ & 0.497 (0.253) & 0.419 (0.223) & 0.451 (0.293) & 0.552 (0.094) \\
NPOM & \EscVerb{serp} & $\round{h_i}$ & \second{0.207} (0.098) & \best{0.213} (0.078) & 0.277 (0.111) & 0.319 (0.118) \\
\bottomrule
\end{tabular}
}
\end{subtable}
\begin{subtable}{\linewidth}
\centering
\caption{At least one of $b_{*1}(u)=-1+m_1u^2,b_{*2}(u)=1+m_2 u^2$ is a constant function.}
\label{table:synthetic_setting2_constant}
\scalebox{0.85}{
\begin{tabular}{rrr|cc|cc}
\toprule 
\multirow{2}{*}{Model} & \multirow{2}{*}{Optimizer} & \multirow{2}{*}{Response} & \multicolumn{2}{c|}{$(m_1,m_2)=(0.05,0)$} & \multicolumn{2}{c}{$(m_1,m_2)=(0,0)$} \\
 & & & MSE($\hat{b}_1$) & MSE($\hat{b}_2$) & MSE($\hat{b}_1$) & MSE($\hat{b}_2$) \\
\hline
\rowcolor{gray!30}
N$^3$POM & MPS & $h_i$ & \best{0.195} (0.161) & 0.036 (0.099) & 0.050 (0.082) & 0.067 (0.117) \\ 
\rowcolor{gray!30}
N$^3$POM & MPS & $\round{h_i}$ & 0.293 (0.132) & 0.055 (0.051) & 0.056 (0.034) & 0.045 (0.052) \\
\rowcolor{gray!30}
N$^3$POM & MPS & $\cut{\round{h_i}}$ & 0.518 (0.043) & \second{0.026} (0.041) & \second{0.040} (0.031) & \best{0.015} (0.056) \\
POM & \EscVerb{polr} & $\round{h_i}$ & 0.517 (0.020) & \best{0.021} (0.032) & \best{0.023} (0.021) & \second{0.016} (0.033) \\
NPOM & (ridge) \EscVerb{oNet} & $\round{h_i}$ & 0.384 (0.065) & 0.556 (0.068) & 0.507 (0.074) & 0.554 (0.079) \\
NPOM & (elastic) \EscVerb{oNet} & $\round{h_i}$ & 0.438 (0.200) & 0.387 (0.201) & 0.203 (0.153) & 0.183 (0.198) \\ 
NPOM & (lasso) \EscVerb{oNet} & $\round{h_i}$ & 0.482 (0.467) & 0.507 (0.361) & 0.169 (0.236) & 0.228 (0.203) \\
NPOM$^{\dagger}$ & (ridge) \EscVerb{oNet} & $\round{h_i}$ & 0.409 (0.050) & 0.151 (0.059) & 0.087 (0.067) & 0.129 (0.064) \\
NPOM$^{\dagger}$ & (elastic) \EscVerb{oNet} & $\round{h_i}$ & 0.482 (0.063) & 0.130 (0.123) & 0.037 (0.051) & 0.059 (0.064) \\
NPOM$^{\dagger}$ & (lasso) \EscVerb{oNet} & $\round{h_i}$ & 0.472 (0.058) & 0.118 (0.181) & \second{0.040} (0.060) & 0.052 (0.069) \\
NPOM & \EscVerb{serp} & $\round{h_i}$ & \second{0.278} (0.119) & 0.054 (0.052) & 0.041 (0.037) & 0.046 (0.053) \\
\bottomrule
\end{tabular}
}
\end{subtable}
\end{table}

\section{Additional experiments: real-world datasets}
\label{supp:additional_experiments}

In addition to the autoMPG6, autoMPG8, and real-estate datasets used in the main body of the study, we collected boston-housing, concrete, and airfoil datasets from the UCI machine learning repository~\citep{dua2019UCI}. 
We train the N$^3$POM by leveraging these datasets described in the following section and their results are shown in Figures~\ref{fig:boston-housing}--\ref{fig:airfoil}. 


\paragraph[autoMPG8]{autoMPG8 ($n=392, d=7$).} 
The autoMPG8 ($n=392,d=7$) datasets consists of 7 covariates: 5 covariates (``Displacement'', ``Horse\_power'', ``Weight'', ``Acceleration'', and ``Model\_year'') shared with the autoMPG6 datasets, and the remaining 2 covariates (``Cylinders'' and ``Origin'').

\paragraph[boston-housing]{boston-housing ($n=506, d=12$).} 
The boston-housing dataset consists of 12 covariates (``crim'' (continuous), ``zn'' (continuous), ``indus'' (continuous), ``chas'' (binary), ``nox'' (continuous), ``rm'' (continuous), ``age'' (continuous), ``dis'' (continuous), ``rad'' (discrete), ``tax'' (continuous), ``ptratio'' (continuous), ``lstat'' (continuous)) and a continuous response (``medv'') representing the housing price in boston. 
We preliminarily removed the ``black'' row for fairness. 
Detailed descriptions of the covariates and the response are as follows: 
\textbf{crim:} per capita crime rate by town, 
\textbf{zn:} proportion of residential land zoned for lots over 25,000 sq.ft, 
\textbf{indus:} proportion of non-retail business acres per town, 
\textbf{chas:} Charles River dummy variable (=1 the house is located next to the river; 0 otherwise), 
\textbf{nox:} nitric oxides concentration (parts per 10 million), 
\textbf{rm:} average number of rooms per dwelling, 
\textbf{age:} proportion of owner-occupied units built prior to 1940, 
\textbf{dis:} weighted distances to five Boston employment centers , 
\textbf{rad:} index of accessibility to radial highways, 
\textbf{tax:} full-value property-tax rate per $\$10,000$, 
\textbf{ptraito:} pupil-teacher ratio by town, 
\textbf{lstat:} lower status of the population. 
(response) \textbf{medv:} Median value of owner-occupied homes in $\$$1000's.

\paragraph[concrete]{concrete ($n=1030,d=8$).} 
The concrete dataset consists of eight covariates (``Cement'' (continuous), ``BlastFurnaceSlag'' (continuous), ``FlyAsh'' (continuous), ``Water'' (continuous), ``Superplasticizer'' (continuous), ``CoarseAggregate'' (continuous), ``FineAggregate'' (continuous), ``Age''(continuous)) and a continuous response (``ConcreteCompressiveStrength''). 

\paragraph[airfoil]{airfoil ($n=1503,d=5$).}
The airfoil dataset consists of $5$ covariates (
frequency in hertz ``freq'' (continuous), 
angle of attack in degrees ``angle'' (continuous),
chord length in meters ``chord'' (continuous), 
free-stream velocity in meters ``velocity'' (continuous), 
suction side displacement thickness in meters ``disp. thickness'' (continuous) 
and a continuous response, ``sound pressure'' representing the scaled sound pressure level in decibels.



\begin{figure}[!p]
\centering
\includegraphics[width=\textwidth]{./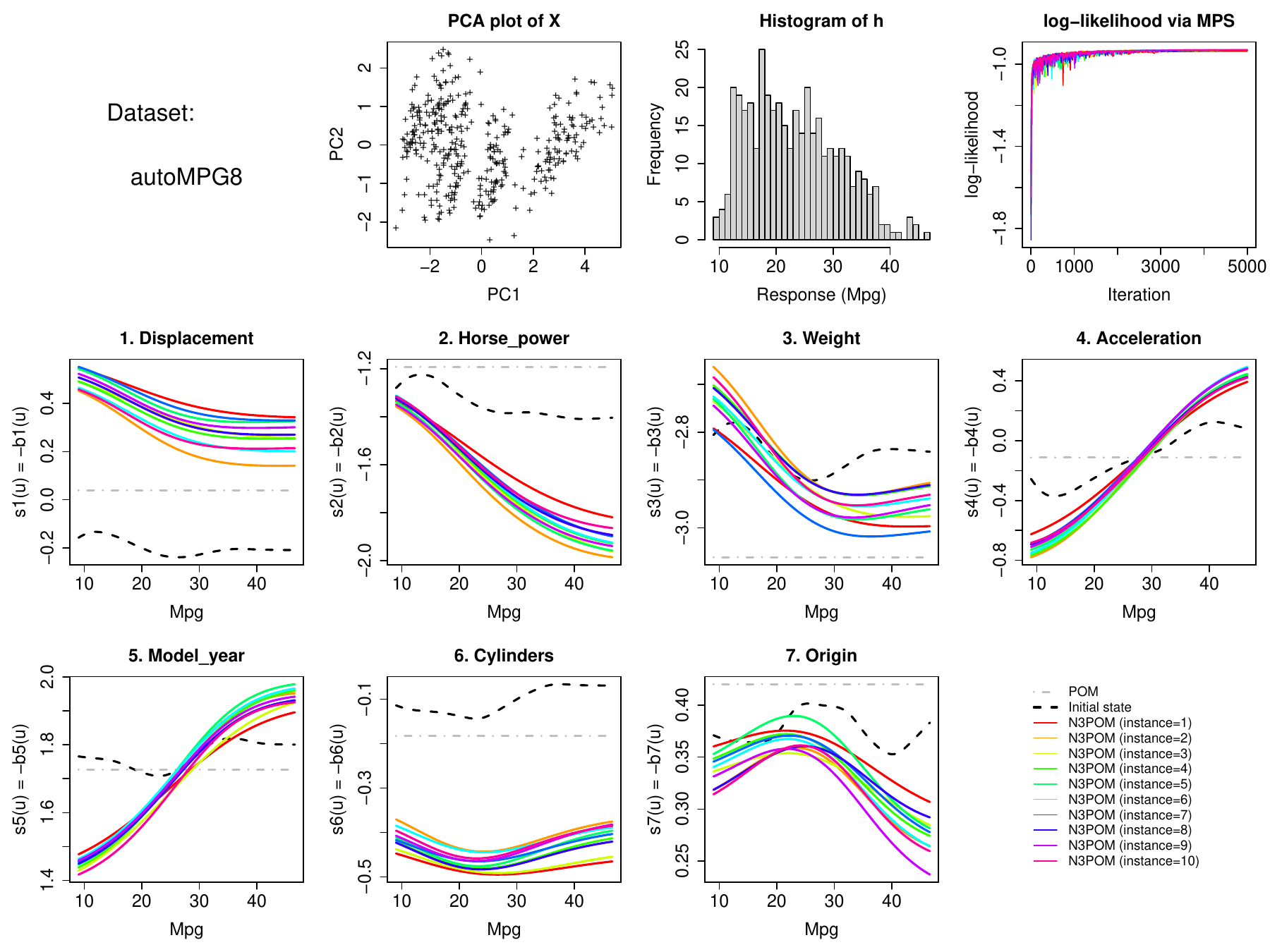}
\caption{autoMPG8 dataset experiment.}
\label{fig:autoMPG8}
\end{figure}

\begin{figure}[!p]
\centering 
\includegraphics[width=\textwidth]{./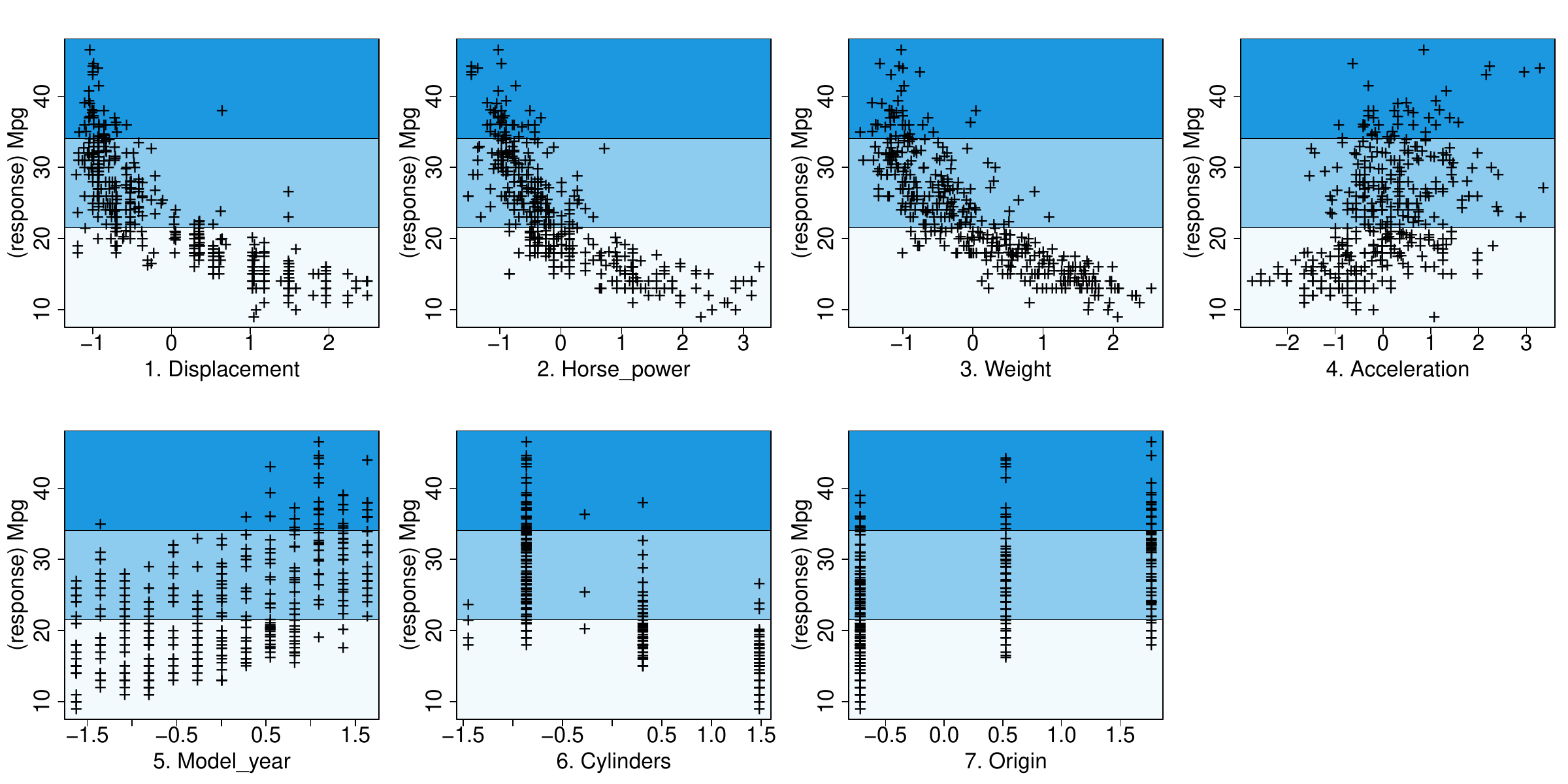}
\caption{Scatter plots for autoMPG8 dataset.}
\end{figure}

\begin{figure}[!p]
\centering
\includegraphics[width=\textwidth]{./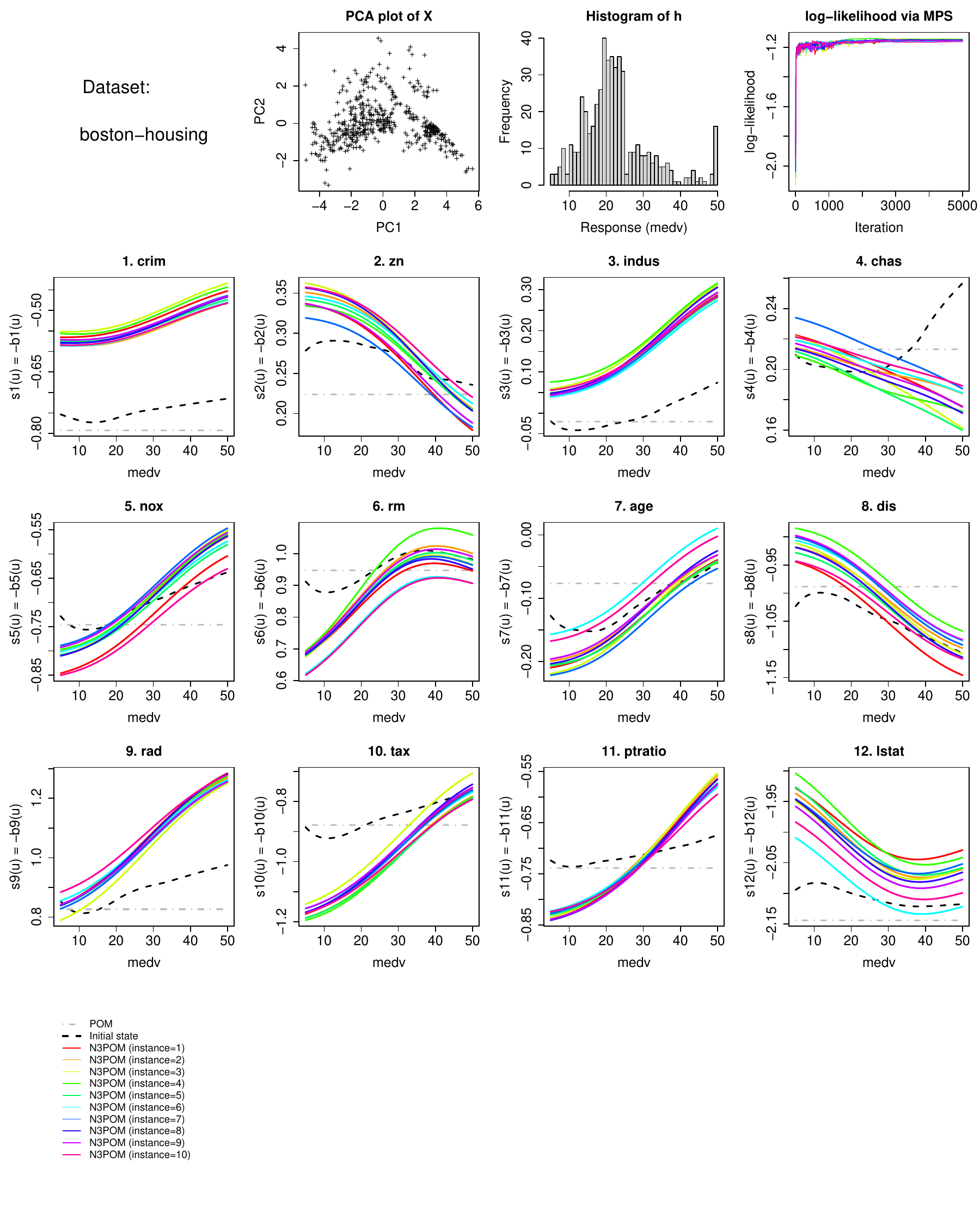}
\vspace{-5em}
\caption{Boston-housing dataset experiment.}
\label{fig:boston-housing}
\end{figure}

\begin{figure}[!p]
\centering 
\includegraphics[width=\textwidth]{./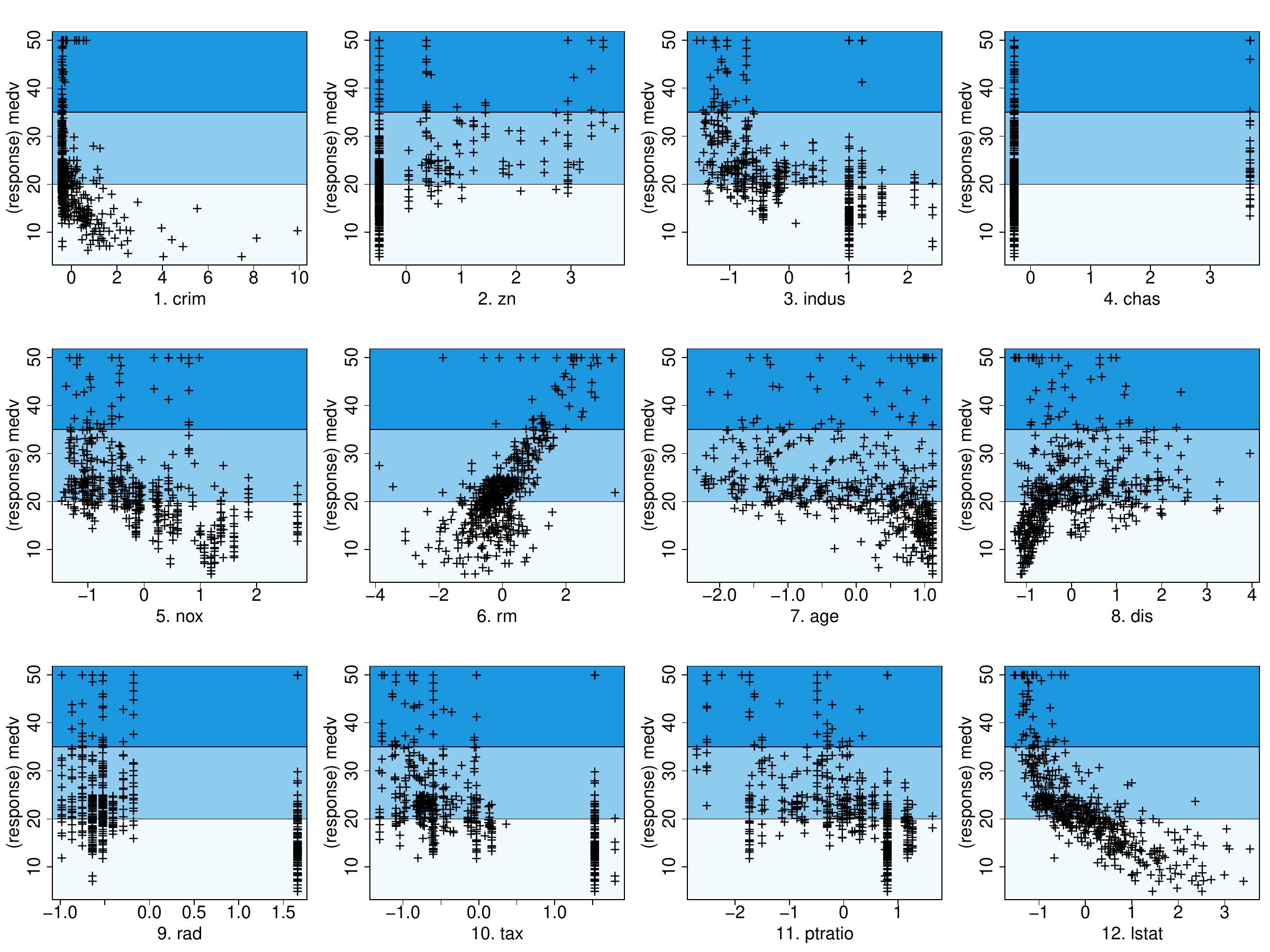}
\caption{Scatter plots for Boston-housing dataset.}
\end{figure}

\begin{figure}[!p]
\centering
\includegraphics[width=\textwidth]{./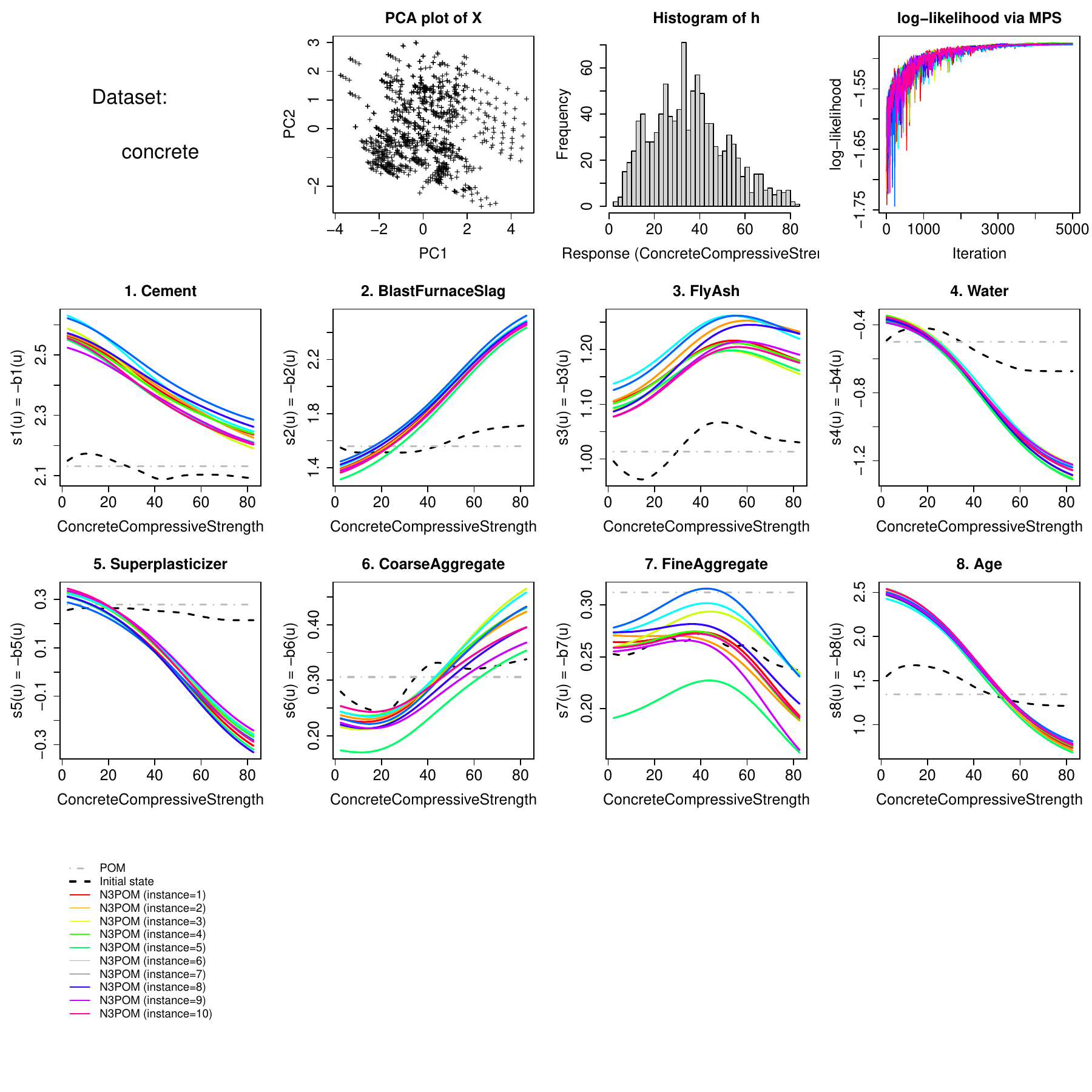}
\caption{concrete dataset experiment.}
\label{fig:concrete}
\end{figure}

\begin{figure}[!p]
\centering 
\includegraphics[width=\textwidth]{./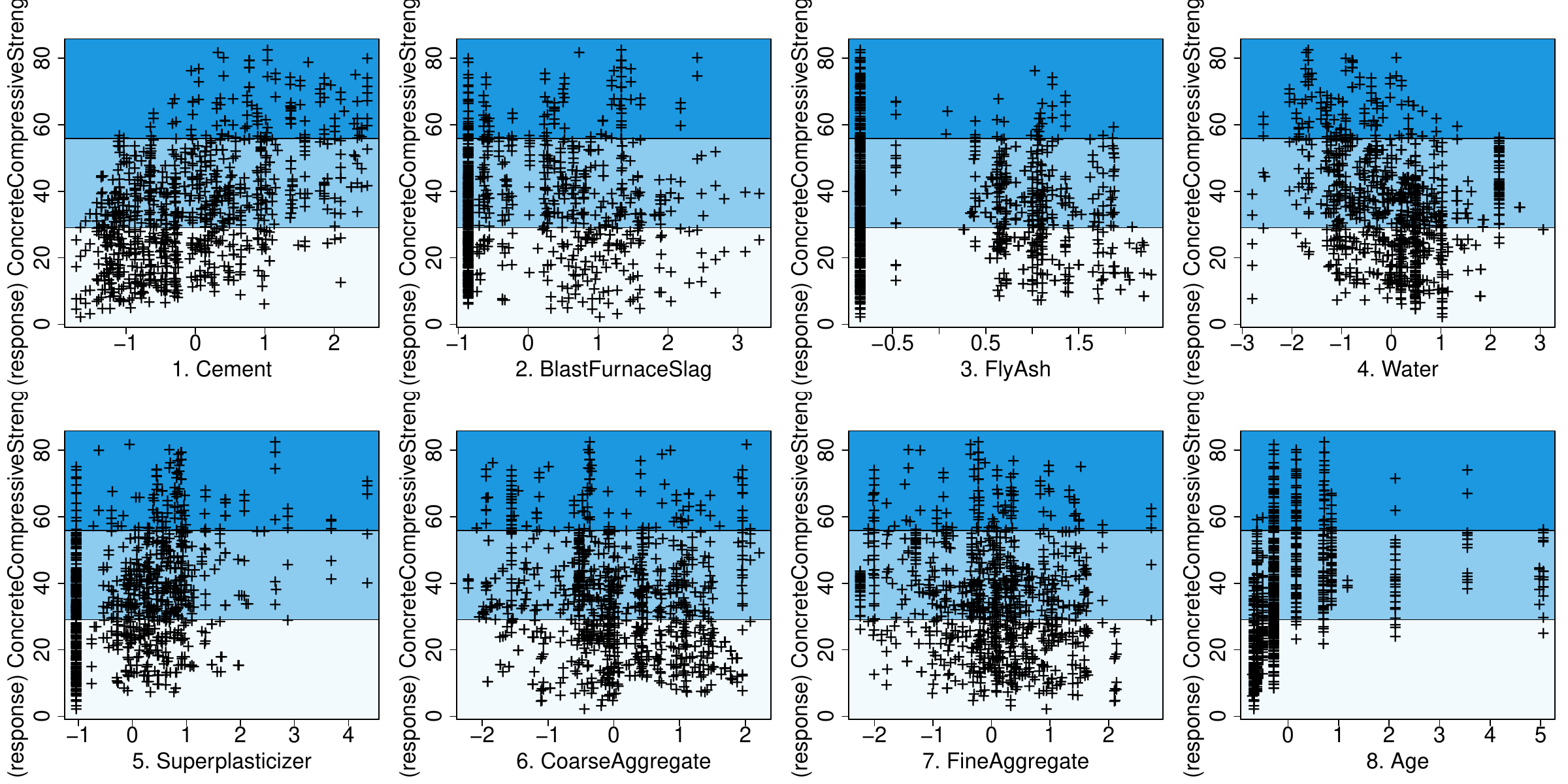}
\caption{Scatter plots for concrete dataset.}
\end{figure}

\begin{figure}[!p]
\centering
\includegraphics[width=\textwidth]{./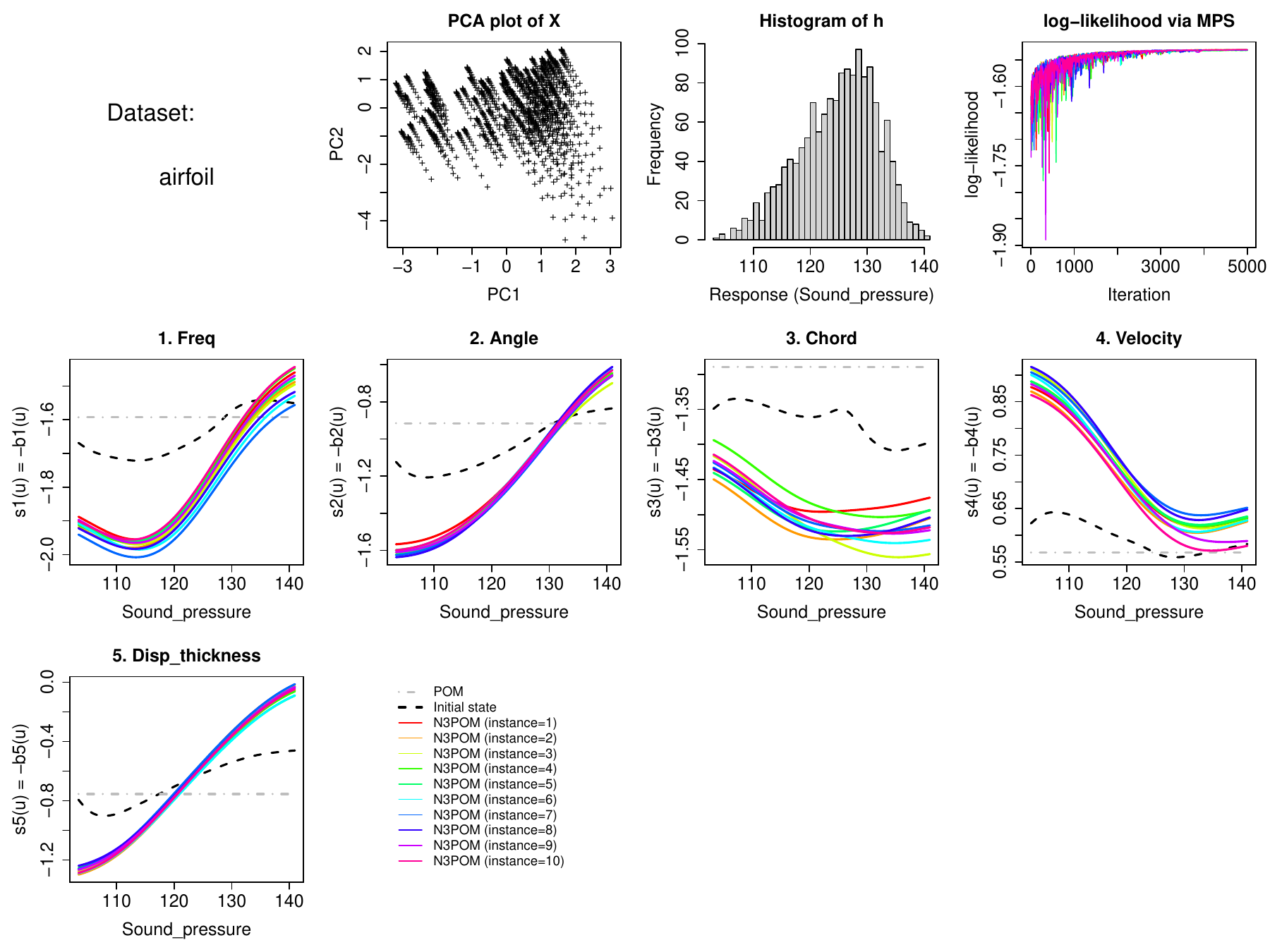}
\caption{airfoil dataset experiment.}
\label{fig:airfoil}
\end{figure}

\begin{figure}[!p]
\centering 
\includegraphics[width=\textwidth]{./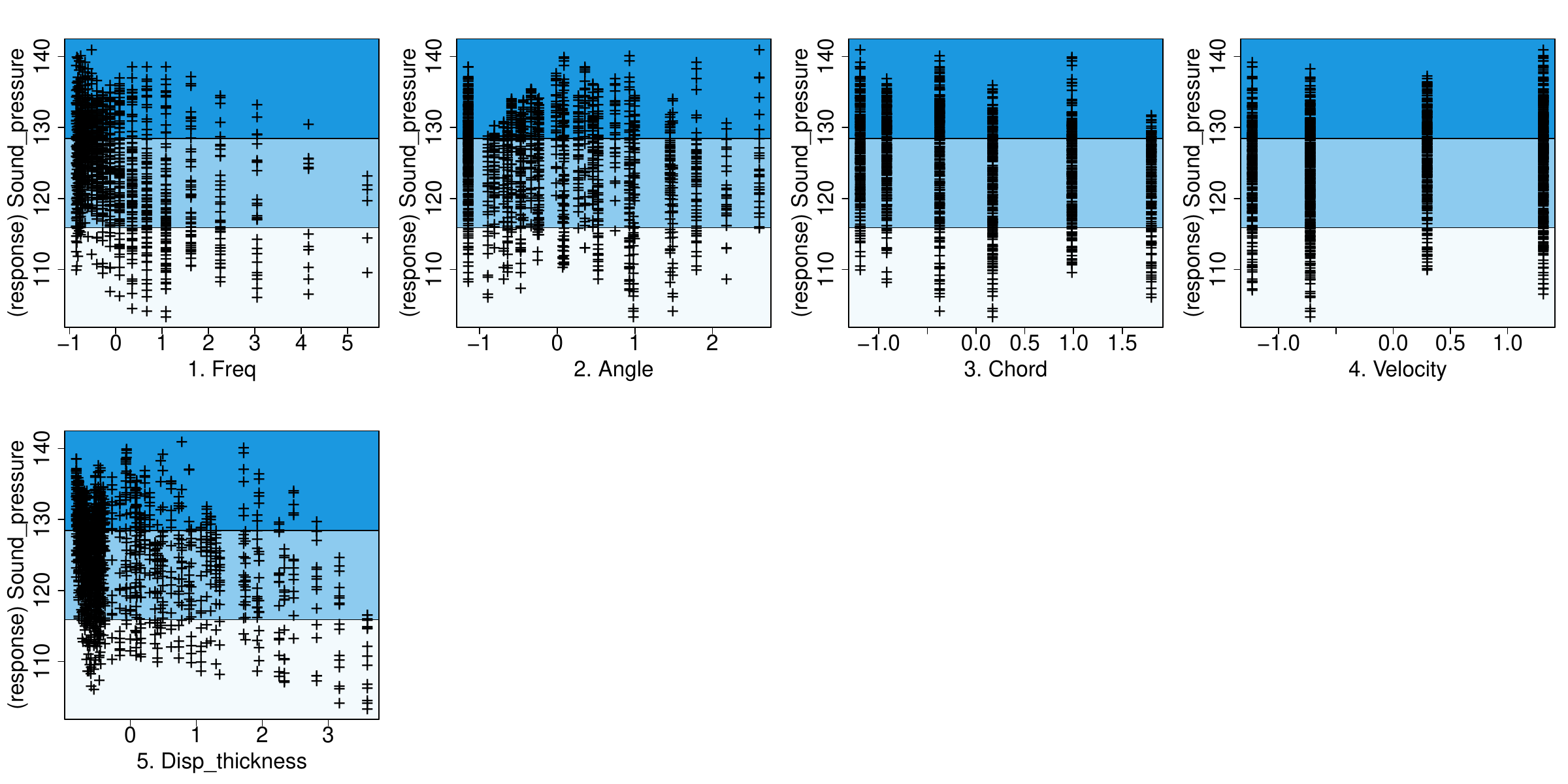}
\caption{Scatter plots for airfoil dataset.}
\end{figure}

\clearpage

\section{Summary of datasets.}
\label{supp:summary_of_datasets}

The covariates and response in autoMPG8, real-estate, boston-housing, concrete, airfoil, and cycle powerplant datasets are summarized in Figures~\ref{summary:autompg8}--\ref{summary:airfoil}. 
These plots are generated by \verb|pairs.panels| function in \verb|psych| package\footnote{\url{https://cran.r-project.org/package=psych}}. 
Therein, the scatter plots for each pair of covariates and their correlation coefficients are provided. 
We omit the plot for autoMPG6 because it is completely subsumed in autoMPG8. 
Note that the covariates and responses are preliminarily standardized by following the procedure described in Section~\ref{subsec:real-world_datasets}. 

\begin{figure}[!ht]
\centering 
\includegraphics[width=\textwidth]{./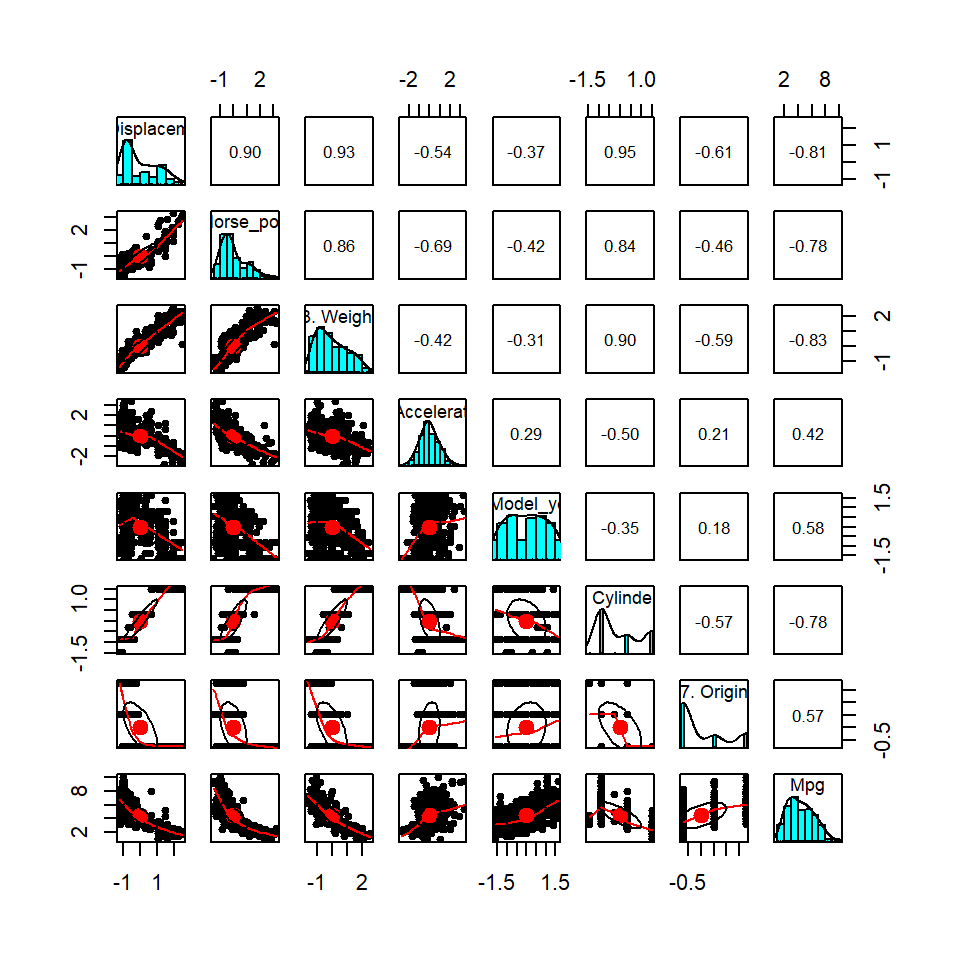}
\caption{Summary of autoMPG8 dataset (subsuming autoMPG6 dataset).}
\label{summary:autompg8}
\end{figure}

\begin{figure}[!ht]
\centering 
\includegraphics[width=0.8\textwidth]{./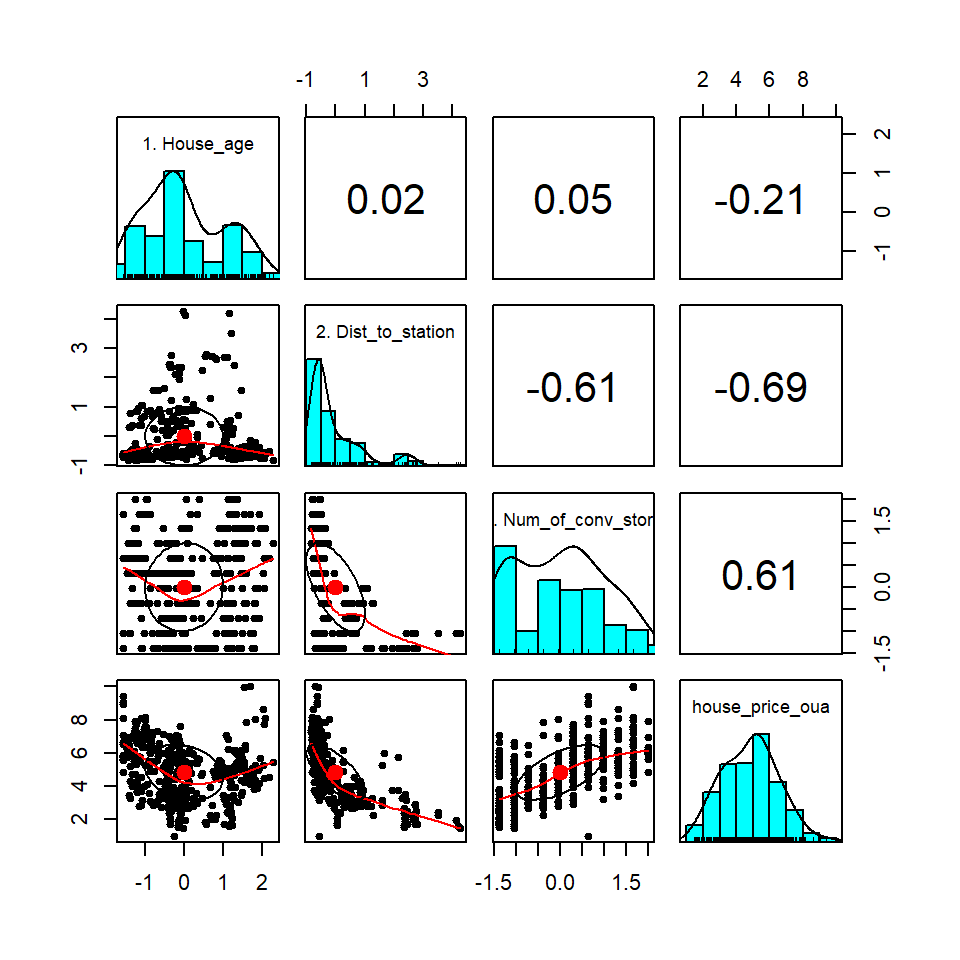}
\caption{Summary of the real-estate dataset.}
\label{summary:real-estate}
\end{figure}

\begin{figure}[!ht]
\centering 
\includegraphics[width=\textwidth]{./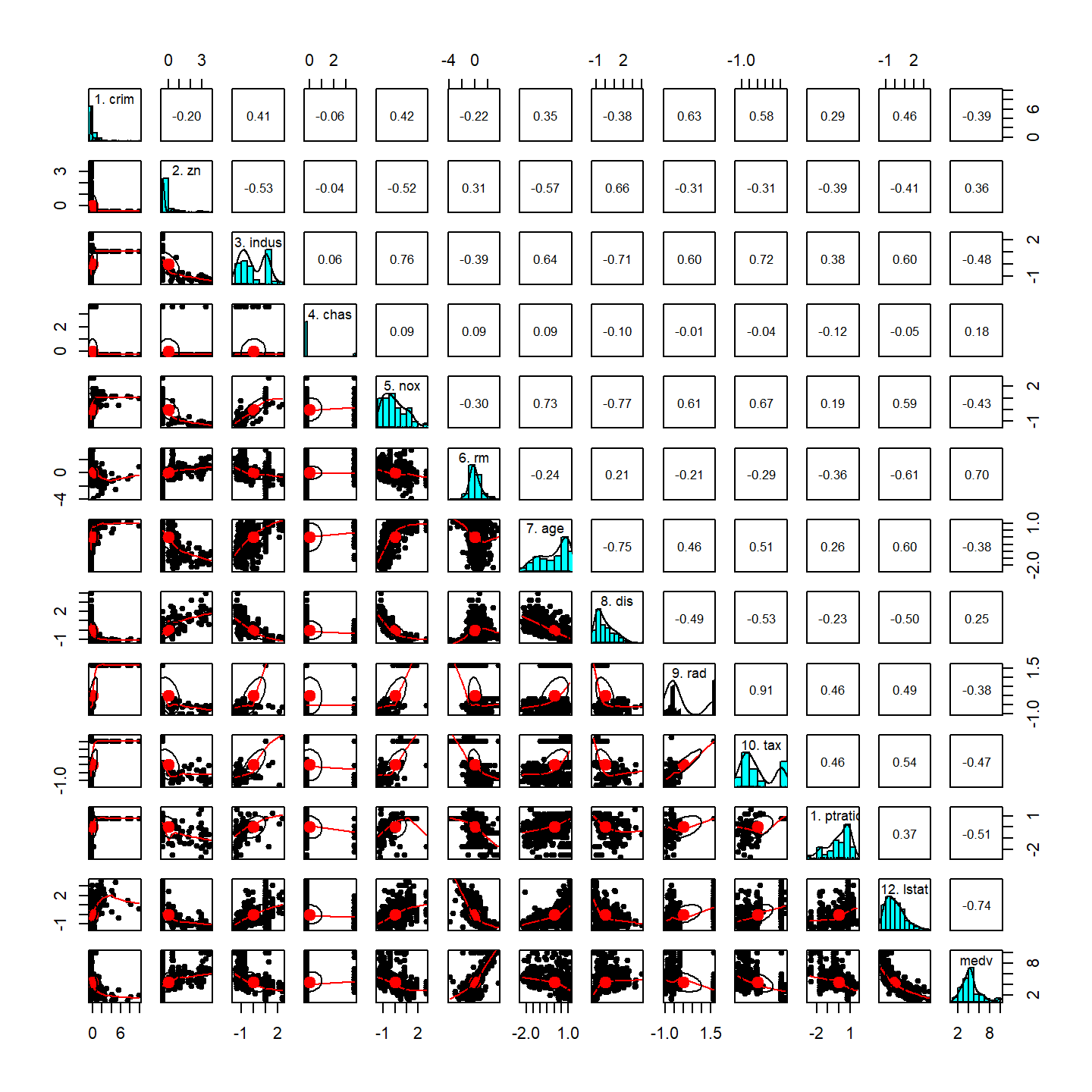}
\caption{Summary of the boston-housing dataset.}
\label{summary:boston-housing}
\end{figure}

\begin{figure}[!ht]
\centering 
\includegraphics[width=\textwidth]{./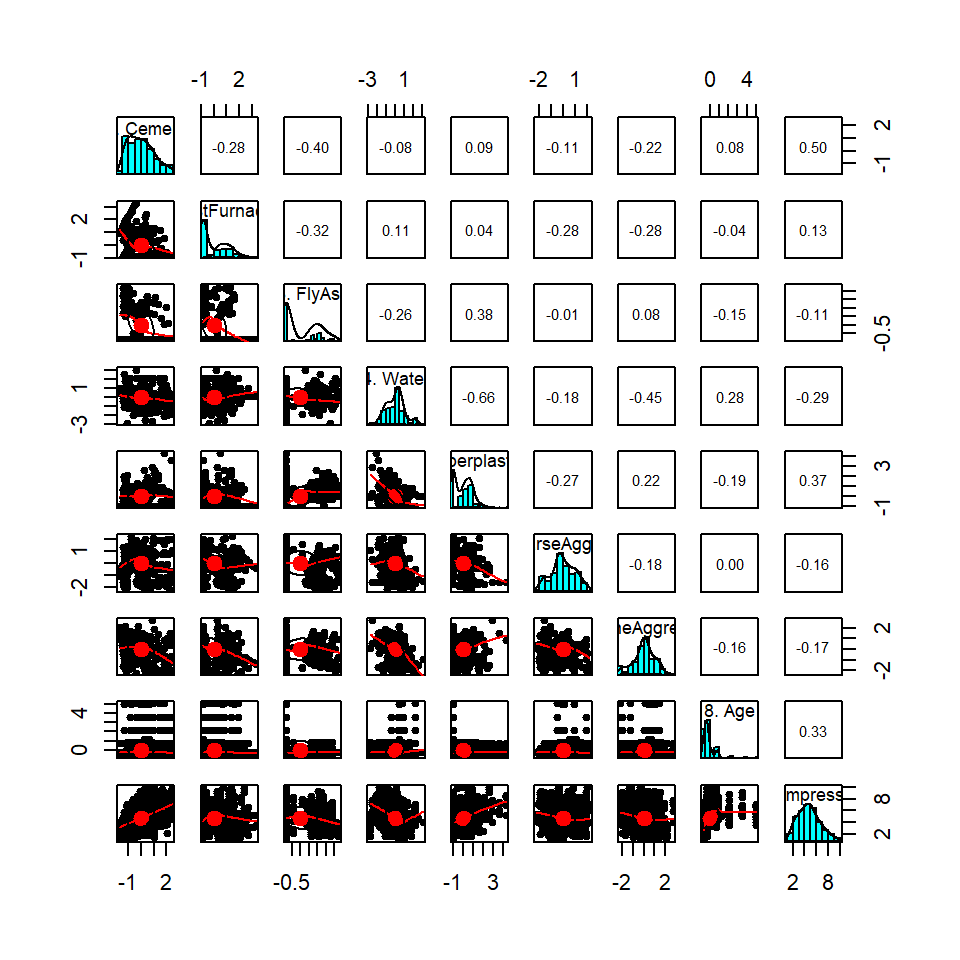}
\caption{Summary of the concrete dataset.}
\label{summary:concrete}
\end{figure}

\begin{figure}[!ht]
\centering 
\includegraphics[width=0.8\textwidth]{./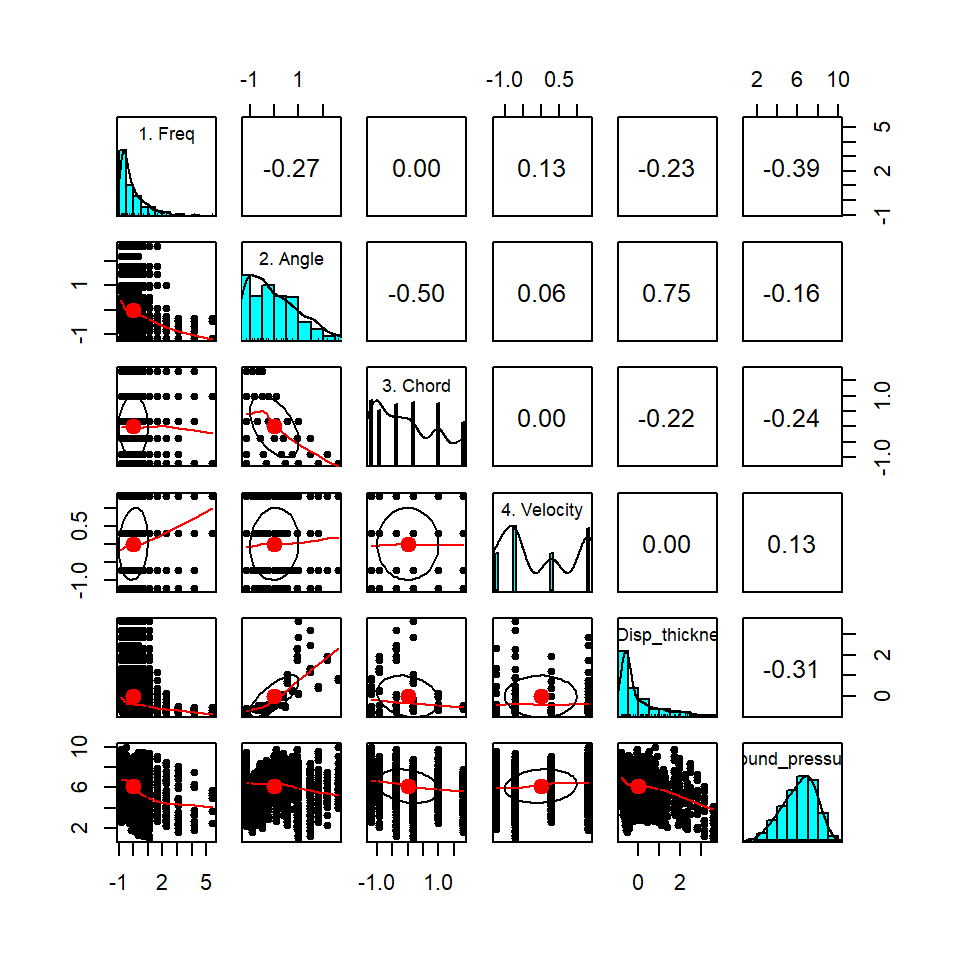}
\caption{Summary of the airfoil dataset.}
\label{summary:airfoil}
\end{figure}

\clearpage 

\section{Different schedule for decreasing learning rate}

The learning rate in MPS algorithm is multiplied by $0.95$ for each $50$ iteration, in both synthetic and real-world dataset experiments. 
To demonstrate the robustness against the different schedule for learning rate, we employ another schedule for the learning rate: the rate is multiplied by $0.97$ for each $100$ iteration (i.e., the decay is slower): see Figure~\ref{fig:real-estate-20000-100-097}.

In our experiments, the N$^3$POM, with its $270$ parameters ($L=50,R=20,d=3$), offers significant flexibility. This makes accurately determining the exact value of the coefficient function $\bs b_*(u)$, particularly in its tail region, challenging in a small dataset. Despite this, N$^3$POM demonstrates a consistent ability to capture the overall trend, as evidenced by comparing with Figure~\ref{fig:real-estate-main}(\subref{fig:real-estate}), even though slight fluctuation in its output is observed.

\begin{figure}[!ht]
\centering
\includegraphics[width=\textwidth]{./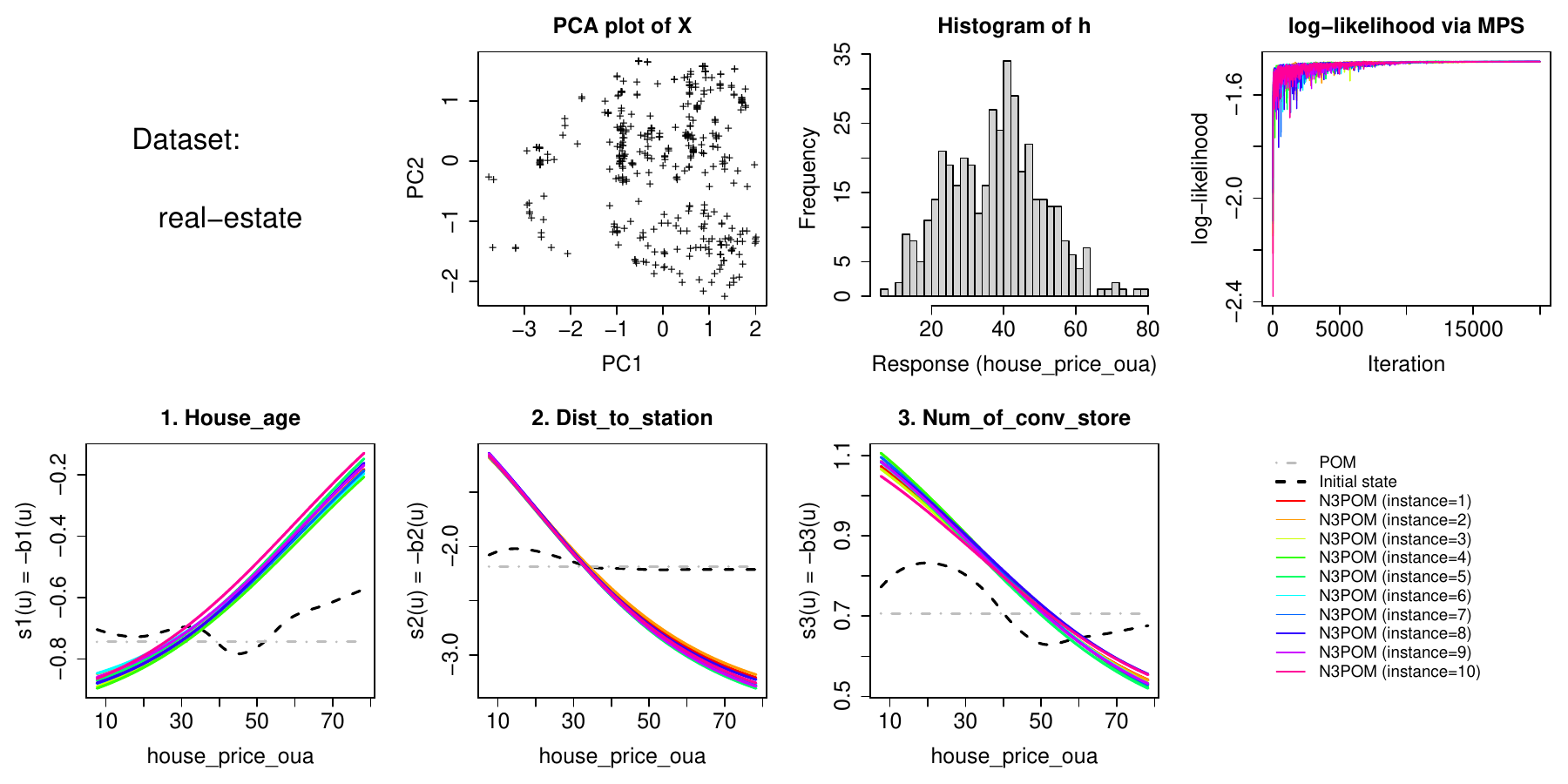}
\caption{Real estate dataset experiment.}
\label{fig:real-estate-20000-100-097}
\end{figure}

\clearpage
\section{Adaptation to the discrete responses}
\label{supp:additive_perturbation}

While we consider the response $H \in [1,J]$ taking a value in the connected interval $[1,J]$, it would be worthwhile to consider the adaptation to the discrete response $G \in \{1,2,\ldots,J\}$ because of the variety of applications. 
Although the proposed N$^3$POM~\eqref{eq:cmccp} can be formally trained using the discrete response, the discrete responses are not sufficient to fully train the continuous model $f_u(\bs x)$. 
See Remark~\ref{remark:log-likelihood_of_NPOM} for the log-likelihood for the interval-censored data, which is typically used to train NPOM, and its relation to the likelihood considered in this study. 
In our experiments with various discretized synthetic and real-world datasets, we found that N$^3$POM trained using either (i) the log-likelihood for interval-censored data or (ii) the log-likelihood~\eqref{eq:proposed_log_lik} considered in this study both exhibit similar behavior to the POM. This means that the estimated coefficients $\bs b(u)$ become constant, even when the underlying function depends on the response variable. 
To address this issue, we propose incorporating an additive perturbation to the responses so that the responses take values in not only $\{1,2,\ldots,J\}$ but also the connected interval $[1,J]$.

Our idea is simple. If we have the discrete responses $g_1,g_2,\ldots,g_n \in \{1,2,\ldots,J\}$, we can uniformly generate the random numbers $e_1,e_2,\ldots,e_n$ i.i.d. over the region $[-0.5,0.5]$ and add $e_1,e_2,\ldots,e_n$ to $g_1,g_2,\ldots,g_n$, respectively. Finally, we round the obtained responses to take the values between $1$ and $J$. 
Namely, we define a (random) perturbation operator
\begin{align}
\cut{g_i}
:=
\argmin_{j \in [1,J]} | (g_i + e_i) - j|,
\quad 
(i \in \{1,2,\ldots,n\}).
\label{eq:perturbation_operator}
\end{align}

Although heuristic, this perturbation operator~\eqref{eq:perturbation_operator} is explainable in the context of ordinary least squares regression. As is well-known in asymptotic theory, the estimated regression function in ordinary least squares converges to the conditional expectation $f_*(X)=\mathbb{E}[G \mid X]$; so adding the mean-zero perturbation, $E \sim U[-0.5,0.5]$ is expected not to cause any bias (i.e., $\mathbb{E}[G+E \mid X]=\mathbb{E}[G \mid X]=f_*(X)$). 
A similar result is expected to hold for ordinal regression. 
While the estimation efficiency slightly decreases because of this perturbation, it is advantageous to fully train the continuous NN; see the numerical experiments in Section~\ref{sec:experiments_synthetic} for the effectiveness of the additive perturbation~\eqref{eq:perturbation_operator}.

\section{Remarks on interpretation}
\label{supp:marginal_effects}

While the coefficients $\bs s(u)=-\bs b(u)$ considered in this study are expected to represent the influence of each covariate to the CCP, more strictly speaking, they in fact show the influence on the logit function applied to the CCP. 
Namely, the coefficients are also influenced by the logit function, and the coefficients in the tail region ($u \approx 0,u \approx J$) tend to be amplified; we may employ a marginal effect~\citep{agresti2018simple} $\frac{\partial}{\partial \bs x}\mathbb{P}_{\cnpom}(H > u \mid X=\bs x)=\bs s(u) \sigma^{[1]}(-f_u(\bs x))$ when considering the influence on the CCP directly. However, the marginal effect differs depending on the covariate $X$ and it tends to (excessively) shrink the influence to $0$ in the tail region $(u \approx 0,u \approx J)$ as $\sigma^{[1]}(-\infty)=\sigma^{[1]}(+\infty)=0$; unlike the simple coefficients $\bs s(u)$, marginal effect cannot capture the tendency whether the influence of the covariate increases or decreases (as $u$ increases), due to the shrinking behavior in the tail region. 
In the case of real-estate dataset above, the interesting coefficient $s_k(u)$ of house-age that approximates $0$ as $u \approx J$, cannot be detected when using the marginal effect, as almost all marginal effects approximates $0$ as $u \approx J$ (regardless of how the coefficient is important in the tail region). 
As a future work, it would be worthwhile to consider a more interpretable score to evaluate the influence of the covariates in the context of ordinal regression.

\end{document}